\documentclass[aps,reprint]{revtex4-1}
\usepackage[utf8]{inputenc}
\usepackage{amsfonts}
\usepackage{amsmath}
\usepackage{amssymb}
\usepackage{amsthm}
\usepackage{mathtools}
\usepackage{enumerate}
\usepackage[shortlabels]{enumitem}
\usepackage{array}
\usepackage{bm}
\usepackage{graphicx}
\usepackage{nicefrac}
\usepackage{booktabs}
\usepackage[all]{xy}
\usepackage{hyperref}
\usepackage{subcaption}
\usepackage{bbm}
\usepackage{dsfont}
\usepackage{tikz}
\usepackage{tikz-cd}
\usepackage{tkz-graph}
\usetikzlibrary{arrows}
\usepackage{cancel}
\usepackage{xcolor}

\usepackage[font=small,labelfont=bf]{caption}
\usepackage[colorinlistoftodos,prependcaption]{todonotes}
\usepackage{MnSymbol}
\usepackage{multirow}

\definecolor{light-gray}{gray}{0.6}

\newcommand\mc[1]{\mathcal{#1}}

\newcommand\PPi{\underline{\Pi}}

\newtheorem{theorem}{Theorem}
\newtheorem*{theorem*}{Theorem}
\newtheorem{lemma}{Lemma}
\newtheorem{proposition}{Proposition}

\newtheorem{definition}{Definition}

\newcolumntype{C}{>{$}c<{$}}
\AtBeginDocument{
\heavyrulewidth=.08em
\lightrulewidth=.05em
\cmidrulewidth=.03em
\belowrulesep=.65ex
\belowbottomsep=0pt
\aboverulesep=.4ex
\abovetopsep=0pt
\cmidrulesep=\doublerulesep
\cmidrulekern=.5em
\defaultaddspace=.5em
}

\begin{document}
\title{No-signalling, Contextuality, and the Arrow of Time}
\author{Markus Frembs$^{1,}$}
\email{markus.frembs13@imperial.ac.uk}
\vspace{-0.5cm}
\author{Andreas D\"oring}
\email{info@andreas-doering.net}
\affiliation{$^1$Blackett Laboratory, Imperial College London, Prince Consort Road, SW7 2AZ, United Kingdom}

\begin{abstract}
    Bell's seminal paper shows that some correlations in quantum theory are not reconcilable with hidden variables and the classical notion of locality. Yet, a weaker notion of locality, known as no-signalling, survives the no-go-result. We study its restrictiveness by considering the full set of local quantum observables. This leads to a much larger set of no-signalling constraints than usually considered, which has been shown to be sufficient to exclude PR-boxes and other types of super-quantum correlations. Here, we emphasise that this result rests on contextuality in a fundamental way, in particular, we show how no-signalling arises naturally from context composition. Based on this connection, we improve existing results by establishing a one-to-one correspondence between quantum states and (collections of) non-signalling probability distributions over product contexts under an additional consistency condition between time arrows in subsystems.
\end{abstract}

\maketitle


\paragraph*{Introduction.} A common version of Bell's theorem is the CHSH inequality, which bounds correlations in the outcome statistics of local Stern-Gerlach experiments on two space-like separated spin-$\frac{1}{2}$ systems \cite{CHSH}. Measuring the spin component along an arbitrary direction in space, one always obtains either one of two outcomes: spin-up or spin-down. A natural notion of locality has that the individual choices of (spin) measurements as well as their outcomes neither affect the choice nor outcome of the (spin) measurement at the space-like separated site. Locality in this form is a feature of classical physics (more generally, deterministic hidden variable models) \cite{Fine1982}, and constrains the expectation value of the quantity $c = a\times b + a\times b' + a'\times b - a'\times b'$, where $a,a'$ and $b,b'$ denote local Stern-Gerlach measurements, within the latter to
\begin{equation}\label{eq: classical CHSH bound}
    \mathbb{E}_\mathrm{cl}(c) \leq 2\; .
\end{equation}
Quantum correlations famously exceed this bound and are themselves subject to Tsirelson's bound \cite{Tsirelson1980}: $\mathbb{E}_\mathrm{qm}(c) \leq 2\sqrt{2}$. A violation of Eq.~(\ref{eq: classical CHSH bound}) in the CHSH experiment has been reported in recent years after previous attempts had been shown to allow for certain loopholes \cite{ShalmEtAl2015,ZeilingerEtAl2015}.
It is now widely accepted that nature is not classical and quantum correlations do exceed those in Eq. (\ref{eq: classical CHSH bound}) as well as higher dimensional, multipartite generalisations thereof. The usual interpretation is that nature is nonlocal.

Given the violation of classical correlation bounds, understanding the limitations on quantum correlations marks an important and ongoing research objective. One line of this research has focused on the fact that while quantum mechanics is hardly reconcilable with the classical notion of locality, it does satisfy the more general notion of no-signalling.
Let $a, b$ represent measurements in systems $1, 2$, respectively, and let $A$, $B$ represent corresponding measurement outcomes. Then the measurement statistics satisfy \emph{no-signalling} if the joint probability distribution marginalises to local probability distributions conditioned on the choice of \emph{local} measurements only, formally:
\begin{align}\label{eq: no-signalling probability distributions}
    \begin{split}
    P(A \mid a) &= \sum_B P(A,B \mid a,b) \\
    P(B \mid b) &= \sum_A P(A,B \mid a,b)
    \end{split}
\end{align}
A well-known example for such a set of (non-physical) probability distributions is given by a PR-box \cite{PopescuRohrlich1994}. It thus seems that no-signalling as a physical principle is too weak to single out quantum theory and other, often information theoretically motivated principles have been suggested to complement it (see \cite{PawlowskiEtAl2009, vanDam2013}, for instance).

Importantly, Eq.~(\ref{eq: no-signalling probability distributions}) ranges over multiple measurement settings. In order to evaluate all constraints, one must therefore specify the possible choices of, e.g. spin measurements on either subsystem first. From this perspective PR-boxes are probability distributions constrained by just two possible measurement settings on either side. A physically more interesting scenario is that in which we allow arbitrary local (spin) measurements. There are then many more constraints inherent to Eq.~(\ref{eq: no-signalling probability distributions}), and one might suspect that no-signalling restricts to quantum states over the set of \emph{all local} quantum observables. This idea has been studied in \cite{Wehner2010} and formerly in \cite{KlayRandallFoulis1987,Wallach2002}, where it was shown that this hypothesis is false.

In this letter we substantially improve the situation by complementing the narrative with a notion of time orientation in (local) subsystems, which will eventually allow us to identify non-signalling probability distributions with quantum states unambiguously.
Our result fundamentally builds on the concept of contextuality, which may come as little surprise since the violation of Eq.~(\ref{eq: classical CHSH bound}) in quantum theory is closely related to the non-existence of a classical state space picture, as famously proven by Kochen and Specker \cite{KochenSpecker1967}. Generalising to contextual systems,
we find that no-signalling arises naturally from context composition. This is already evident in the framework of unentangled frame functions in \cite{Wallach2002}; for clarity, we thus start with a brief review of the latter.
For more details on the relation between contextuality and nonlocality, as well as other key quantum features, we refer to \cite{FreDoe19a}.
Proofs and further details can be found in the Supplementary Material to this letter.




\paragraph*{Non-signalling (product) frame functions.} Throughout, we take (local) observables to be represented by bounded self-adjoint operators on some Hilbert space, $\mc{L}_\mathrm{sa}(\mc{H})$. For simplicity we only consider finite dimensional systems, in particular, $\mc{H} = \mathbb{C}^d$,
$\mc{L}(\mc{H}) = M_d(\mathbb{C})$, and
$\mc{L}_\mathrm{sa}(\mc{H}) = \{a \in M_d(\mathbb{C}) \mid a^* = a\}$.

Recall that a \emph{frame function of weight $W \in \mathbb{R}$} on the unit sphere $S(\mc{H})$ is a function $f: S(\mc{H}) \rightarrow \mathbb{R}$ such that $\sum_{j=1}^d f(v_j) = W$ for all orthonormal bases $(v_j)_{j=1}^d \in \mathrm{ONB}(\mc{H})$. Frame functions play a crucial role in Gleason's theorem \cite{Gleason1975}, and in its generalisation to composite systems in \cite{KlayRandallFoulis1987, Wallach2002}.

Note that from an operational perspective the only outcomes accessible to local observers correspond to elements in $\sigma(\mc{H}) := \{v_1 \otimes \cdots \otimes v_n \in S(\mc{H}) \mid v_i \in S(\mc{H}_i)\}$, $\mc{H} = \otimes_{i=1}^n \mc{H}_i$. It is thus natural to consider non-negative \emph{unentangled frame functions} with domain $\sigma(\mc{H}) \subsetneq S(\mc{H})$ and constraints restricted to $\mathrm{ONB}(\sigma(\mc{H})) \subsetneq \mathrm{ONB}(\mc{H})$ instead.
\cite{Wallach2002} shows that such frame functions almost correspond with quantum states in the form of density matrices, where almost means up to positivity (cf. \cite{KlayRandallFoulis1987, Wehner2010}).

A further restriction compared to $\mathrm{ONB}(\sigma(\mc{H}))$ are \emph{frame functions over product bases}: $f: \sigma(\mc{H}) \rightarrow \mathbb{R}$ with $\sum_{j_1,\cdots,j_n=1}^{d_1,\cdots,d_n} f(v_{j_1,1} \otimes \cdots \otimes v_{j_n,n}) = W$, $d_i := \mathrm{dim}(\mc{H}_i)$ only on product bases, $\beta(\mc{H}) := \{(v_{j_1,1} \otimes \cdots \otimes v_{j_n,n})_{j_1,\cdots,j_n=1}^{d_1,\cdots,d_n} \mid (v_{j_i,i})_{j_i=1}^{d_i} \in \mathrm{ONB}(\mc{H}_i)\}$.
Clearly, $S(\mc{H})$ contains many nonlocal states. But even unentangled bases cannot always be implemented with local operations and classical communication only \cite{WoottersEtAl1999}, suggesting product bases as the most natural choice of constraints. Yet, non-negative frame functions over product bases do not even correspond with linear operators (cf. Prop.~5 in \cite{Wallach2002}).

To gain some insight into what is `missing', it is helpful to consider examples of frame functions over product bases. \cite{Wallach2002} gives a whole family of examples, which are easily seen to correspond to signalling distributions. We thus add more constraints in the form of no-signalling: for $i \in \{1,\cdots,n\}$ with $(v_{j_i,i})_{j_i=1}^{d_i}, (w_{k_i,i})_{k_i=1}^{d_i} \in \mathrm{ONB}(\mc{H}_i)$ and $x_{l_r,r} \in \mc{S}(\mc{H}_r)$ for all $l_r \in \{1,\cdots,d_r\}$, $r \neq i$,
\begin{equation}\label{eq: no-signalling constraints}
    \sum_{j_i=1}^{d_i} f(x_{l_r,r} \otimes v_{j_i,i}) = \sum_{k_i=1}^{d_i} f(x_{l_r,r} \otimes w_{k_i,i})\; ,
\end{equation}
where we use the shorthand $x_{l_r,r} \otimes v_{j_i,i} := (x_{l_1,1} \otimes \cdots \otimes x_{l_{i-1},i-1} \otimes v_{j_i,i} \otimes x_{l_{i+1},i+1} \otimes \cdots \otimes x_{l_n,n})$. In light of PR-boxes one might still expect such \emph{non-signalling frame functions} to be more general than quantum states. Yet, Eq.~(\ref{eq: no-signalling constraints}) is already enough to exlude PR-boxes. To see this, we introduce another choice of basis: let $B \in \beta(\mc{H})$, $B' \in \mathrm{ONB}(\mc{H})$ and set $B' \sim B$ if there exists a sequence of unitaries $(U^m)_{m=1}^N$ such that $B^0 = B$, $B^m = U^m B^{m-1}$, $B^N = B'$ and where every unitary $U^m$ acts non-trivially only on local subspaces of the form $x^m_{l_r,r} \otimes (v^m_{j_i,i} + v^m_{j'_i,i})$ with $x^m_{l_r,r} \otimes v^m_{j_i,i}, x^m_{l_r,r} \otimes v^m_{j'_i,i} \in B^m$. Note that the equivalence relation $\sim$ is independent of the choice of product basis $B \in \beta(\mc{H})$, it only depends on the Hilbert space decomposition $\mc{H} = \otimes_{i=1}^n \mc{H}_i$ (cf. Fig.~\ref{fig: transformations between unentangled and product bases}).
We call the elements in $T(\beta(\mc{H})):= \{B' \in \mathrm{ONB}(H) \mid \exists B \in \beta(\mc{H}):\ B' \sim B \}$ \emph{twisted product bases} \footnote{Twisted product bases arise in a similar (but more complex) way to the set of rotations of a Rubik's cube.}. For instance, the unentangled basis in Fig.~\ref{fig: transformations between unentangled and product bases} (cf. \cite{WoottersEtAl1999}) is easily transformed into a product basis, and is thus also a twisted product basis.

\begin{figure}
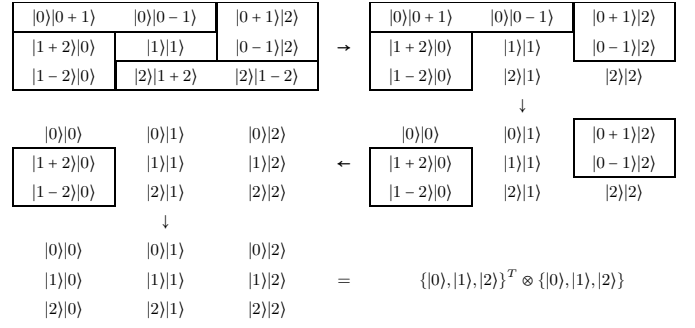

    \scalebox{0.7}{
    {\def\arraystretch{1.5}\tabcolsep=10pt
    \centering
    \begin{tabular}{ccccccc}
        \cline{1-2} \cline{3-3} \cline{5-6} \cline{7-7} \multicolumn{1}{|c}{$|0\rangle|0+1\rangle$} & \multicolumn{1}{c|}{$|0\rangle |0-1\rangle$} & \multicolumn{1}{|c|}{$|0+1\rangle |2\rangle$} & & \multicolumn{1}{|c}{$|0\rangle |0+1\rangle$} & \multicolumn{1}{c|}{$|0\rangle |0-1\rangle$} & \multicolumn{1}{|c|}{$|0+1\rangle |2\rangle$} \\
        \cline{1-1} \cline{1-2} \cline{5-5} \cline{5-6} \multicolumn{1}{|c|}{$|1+2\rangle |0\rangle$} & $|1\rangle |1\rangle$ & \multicolumn{1}{|c|}{$|0-1\rangle|2\rangle$} & $\bm{\rightarrow}$ & \multicolumn{1}{|c|}{$|1+2\rangle |0\rangle$} & $|1\rangle |1\rangle$ & \multicolumn{1}{|c|}{$|0-1\rangle |2\rangle$} \\
        \cline{2-3} \cline{3-3} \cline{7-7} \multicolumn{1}{|c|}{$|1-2\rangle |0\rangle$} & \multicolumn{1}{|c}{$|2\rangle |1+2\rangle$} & \multicolumn{1}{c|}{$|2\rangle |1-2\rangle$} & & \multicolumn{1}{|c|}{$|1-2\rangle |0\rangle$} & $|2\rangle |1\rangle$ & $|2\rangle |2\rangle$ \\
        \cline{1-1} \cline{2-3} \cline{5-5}
        & & & & & $\bm{\downarrow}$ & \\
        \cline{7-7} $|0\rangle |0\rangle$ & $|0\rangle |1\rangle$ & $|0\rangle |2\rangle$ & & $|0\rangle |0\rangle$ & $|0\rangle |1\rangle$ & \multicolumn{1}{|c|}{$|0+1\rangle |2\rangle$} \\
        \cline{1-1} \cline{5-5} \multicolumn{1}{|c|}{$|1+2\rangle |0\rangle$} & $|1\rangle |1\rangle$ & $|1\rangle |2\rangle$ & $\bm{\leftarrow}$ & \multicolumn{1}{|c|}{$|1+2\rangle |0\rangle$} & $|1\rangle |1\rangle$ & \multicolumn{1}{|c|}{$|0-1\rangle |2\rangle$} \\ \cline{7-7} \multicolumn{1}{|c|}{$|1-2\rangle |0\rangle$} & $|2\rangle |1\rangle$ & $|2\rangle |2\rangle$ & & \multicolumn{1}{|c|}{$|1-2\rangle |0\rangle$} & $|2\rangle |1\rangle$ & $|2\rangle |2\rangle$ \\ \cline{1-1} \cline{5-5}
        & $\bm{\downarrow}$ & & & & & \\
        $|0\rangle |0\rangle$ & $|0\rangle |1\rangle$ & $|0\rangle |2\rangle$ & & & & \\
        $|1\rangle |0\rangle$ & $|1\rangle |1\rangle$ & $|1\rangle |2\rangle$  & = & \multicolumn{3}{c}{$\{|0\rangle,|1\rangle,|2\rangle\}^T \otimes \{|0\rangle,|1\rangle,|2\rangle\}$} \\
        $|2\rangle |0\rangle$ & $|2\rangle |1\rangle$ & $|2\rangle |2\rangle$ & & & & 
    \end{tabular}}}
    \caption{The unentangled basis in the top left corner (cf. \cite{WoottersEtAl1999}) is transformed into a product basis (bottom left corner) by successively applying local unitaries, e.g., in the first step $(|2\rangle|1+2\rangle, |2\rangle|1-2\rangle) \rightarrow (|2\rangle|1\rangle, |2\rangle|2\rangle)$ where $|x\rangle|y\rangle := |x\rangle \otimes |y\rangle$ as well as $|x\pm y\rangle := \frac{1}{\sqrt{2}}(|x\rangle \pm |y\rangle)$.}
    \label{fig: transformations between unentangled and product bases}
\end{figure}

We state two important facts about frame functions over twisted product bases. (For proofs and more details we refer to the Supplementary Material.) First, Thm.~2 in \cite{Wallach2002} fails for product bases, yet it \emph{already} holds for frame functions over twisted product bases. Since the latter contain strictly fewer conditions than unentangled frame functions (cf. Prop.~\ref{prop: strict inclusion} in Supp. Mat.), this generalises the result in \cite{Wallach2002}. Second, for frame functions over product bases compatibility with twisting operations is equivalent to no-signalling in Eq.~(\ref{eq: no-signalling constraints}).

\begin{theorem}\label{thm: non-signalling frame functions correspond to self-adjoint operators PVM}
    Let $\mc{H} = \otimes_{i=1}^n \mc{H}_i$ with $\mathrm{dim}(\mc{H}_i) \geq 3$ finite for all $i \in \{1,\cdots,n\}$, $n \in \mathbb{N}$. If $f: \sigma(\mc{H}) \rightarrow \mathbb{R}$ is a non-negative, non-signalling frame function over product bases, then there exists a self-adjoint operator $t: \mc{H} \rightarrow \mc{H}$ such that $f(v) = \mathrm{tr}(t p_v) = \langle v | t | v \rangle$ $\forall v \in \sigma(\mc{H})$, $p_v = |v\rangle\langle v| \in \mc{P}(\mc{H})$.
\end{theorem}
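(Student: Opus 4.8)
The plan is to deduce the theorem from the two facts about frame functions over twisted product bases recorded just before it. Concretely, I would first show that a non-negative, non-signalling frame function $f$ over product bases is automatically a non-negative frame function over the whole collection $T(\beta(\mc{H}))$ of twisted product bases, and then invoke the extension of Thm.~2 of \cite{Wallach2002} to that setting, which already yields a representing self-adjoint operator and thus finishes the proof.

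For the reduction I would show that each elementary twist preserves the frame sum. If $B'\sim B\in\beta(\mc{H})$ via twists $U^1,\dots,U^N$ (so $B^0=B$, $B^m=U^mB^{m-1}$, $B^N=B'$), then each $U^m$ is the identity off a two-plane inside the $i$-th tensor slot spanned by two product vectors $x_{l_r,r}\otimes v_{j_i,i}$ and $x_{l_r,r}\otimes v_{j'_i,i}$ of $B^{m-1}$; since $B^{m-1}$ is an ONB, $U^m$ fixes all other basis vectors and sends just these two to $x_{l_r,r}\otimes w_{j_i,i}$ and $x_{l_r,r}\otimes w_{j'_i,i}$, with $(w_{j_i,i},w_{j'_i,i})$ another ONB of the same two-plane of $\mc{H}_i$ — in particular all vectors of $B^m$ are still product vectors, so $f$ is defined on them. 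Completing $\{v_{j_i,i},v_{j'_i,i}\}$ and $\{w_{j_i,i},w_{j'_i,i}\}$ by a common ONB of their (shared) orthocomplement in $\mc{H}_i$ gives two ONBs of $\mc{H}_i$; feeding these into the no-signalling constraint Eq.~(\ref{eq: no-signalling constraints}) in slot $i$, with the remaining slots frozen, and cancelling the common terms shows $f(x_{l_r,r}\otimes v_{j_i,i})+f(x_{l_r,r}\otimes v_{j'_i,i})=f(x_{l_r,r}\otimes w_{j_i,i})+f(x_{l_r,r}\otimes w_{j'_i,i})$, hence $\sum_{v\in B^m}f(v)=\sum_{v\in B^{m-1}}f(v)$. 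By induction, $\sum_{v\in B'}f(v)=\sum_{v\in B}f(v)=W$, so $f$ is a non-negative frame function over all of $T(\beta(\mc{H}))$. (This recovers one direction of the stated equivalence between no-signalling and twist-compatibility; the other direction — composing two-plane rotations inside a single slot gives arbitrary local basis changes, hence Eq.~(\ref{eq: no-signalling constraints}) — is not needed here.)

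Granting this, the second step is short. By the first of the two recorded facts — the extension of Thm.~2 of \cite{Wallach2002} from unentangled to twisted product bases, which holds precisely because $\dim(\mc{H}_i)\geq3$ for every $i$ — there is a self-adjoint $t:\mc{H}\to\mc{H}$ with $f(v)=\langle v|t|v\rangle$ for every $v$ occurring in some twisted product basis. Since any product vector $v\in\sigma(\mc{H})$ extends to an orthonormal product basis $B\in\beta(\mc{H})\subseteq T(\beta(\mc{H}))$, this gives $f(v)=\langle v|t|v\rangle=\mathrm{tr}(t\,p_v)$ for all $v\in\sigma(\mc{H})$, with $p_v=|v\rangle\langle v|$; uniqueness of $t$ is immediate, since these product-vector expectations already pin down a self-adjoint operator on the tensor product.

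I expect the real difficulty to lie entirely in that background fact — the Gleason-type theorem over twisted product bases — rather than in the reduction. There one must show that although twisted product bases are strictly fewer than all orthonormal bases (one may rotate only two basis vectors at a time, and only within a single tensor factor), they still constrain $f$ rigidly enough to force linearity, and then patch the mutually compatible local data across all $n$ factors into one global self-adjoint operator, with non-negativity supplying the regularity the argument needs, as in \cite{Wallach2002}. The hypothesis $\dim(\mc{H}_i)\geq3$ is essential here, Gleason's theorem failing in dimension two. A minor but genuinely necessary point in the reduction is the claim (asserted in the excerpt) that the relation $\sim$ does not depend on the chosen product basis $B$, which is what makes ``frame function over twisted product bases'' well-defined at all.
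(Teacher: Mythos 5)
Your proposal is correct and follows essentially the same route as the paper: your reduction of no-signalling to invariance under elementary twists is exactly the paper's Lemma~\ref{lm: product bases + ns}, and the Gleason-type result over twisted product bases that you then invoke is the paper's Prop.~\ref{prop: Wallach for twisted product bases} (proved by adapting Wallach's induction). You also correctly locate where the real work lies, namely in that latter background fact rather than in the reduction.
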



Moreover, \cite{Wehner2010} demonstrates that such non-signalling correlations cannot exceed quantum correlations (cf. App.~\ref{sec: Systems of dimensions two}).
In the next paragraph we reformulate Thm.~\ref{thm: non-signalling frame functions correspond to self-adjoint operators PVM} in contextual form, in particular, no-signalling then arises from marginalisation over product contexts.


\paragraph*{Contextuality and Composition.} At the core of contextuality lies the notion of \emph{simultaneous measurability} \cite{Specker1960}. We say that a system is \emph{contextual} if not all its observables can be measured simultaneously in every state. Clearly, classical systems are non-contextual, but also contextual systems might contain sets of simultaneously measurable observables called \emph{contexts}. Moreover, the set of contexts carries an intrinsic order relation arising from \emph{coarse-graining} on outcomes of observables. The resulting partial order is called the \emph{partial order of contexts}.
While maybe simplistic at first glance, in quantum mechanics this structure captures most aspects of the theory. In fact, the only missing information is a choice of time orientation \cite{Doering2014}. Before introducing the latter, we show how quantum states arise in this picture.

Note first that in quantum theory contexts are mathematically represented by commutative subalgebras denoted $V \subseteq \mc{L}(\mc{H})$, which are ordered by inclusion into the corresponding partial order of contexts denoted by $\mc{V}(\mc{H})$. 
In this setup a quantum state becomes a collection of probability distributions $(\mu_V)_{V \in \mc{V}(\mc{H})}$, one in every context. Moreover, \emph{non-contextuality} constrains these \emph{across} contexts: let $\mu_{\tilde{V}}$, $\mu_V$ be measures in contexts $\tilde{V},V$, $\tilde{V} \subseteq V$, then $\mu_{\tilde{V}}$ is obtained from $\mu_V$ by marginalisation, $\mu_{\tilde{V}} =\mu_V|_{\tilde{V}}$. More formally, we define:

\begin{definition}\label{def: dilated probabilistic presheaf}
    Let $\mc{H}$, $\mc{K}$ be Hilbert spaces and let $\mc{V}(\mc{H})$ be the partial order of contexts over $\mc{H}$. The functor $\PPi: \mc{V}(\mc{H})^\mathrm{op} \rightarrow \mathbf{Set}$,
    \begin{align*}
        \PPi_V &:= \{\mu_V: \mc{P}(V) \rightarrow [0,1] \mid \mu_V = v^* \varphi_V v \text{ for } v \in \mc{K}, \\
        & \hspace{-0.3cm} \varphi_V: \mc{P}(V) \hookrightarrow \mc{P}(\mc{K}) \text{ an embedding, and } \mu_V(1) = 1\}\; , \\
        \PPi(\tilde{V} &\subseteq V): \PPi_V \rightarrow \PPi_{\tilde{V}},\ \mu_V = v^* \varphi_V v \mapsto \mu_{\tilde{V}} = v^* \varphi_V|_{\tilde{V}} v\; ,
    \end{align*}
    is called the \emph{dilated probabilistic presheaf over $\mc{H}$. \footnote{Here, $\mc{K}$ is independent of contexts, since dilations exist as long as $\mathrm{dim}(\mc{K}) \geq \mathrm{dim}(\mc{L}(\mc{H}))$ (by Naimark's theorem).}} 
    
    A global section $\gamma = (\mu^\gamma_V)_{V \in \mc{V}(\mc{H})}$ is a collection of probability distribution, one in every context, such that whenever $\tilde{V},V \in \mc{V}(\mc{H})$, $\tilde{V} \subseteq V$ it follows $\mu^\gamma_V|_{\tilde{V}} = \mu^\gamma_{\tilde{V}}$. The set of all global sections of $\PPi$ is denoted by $\Gamma(\PPi(\mc{V}(\mc{H})))$.
\end{definition}

In words, $\PPi$ assigns to every context $V \in \mc{V}(\mc{H})$ the set of probability distributions over its projection lattice $\mc{P}(V)$, and every inclusion relation between contexts $\tilde{V} \subseteq V$ is represented by the marginalisation map between corresponding sets of probability distributions $\PPi(V) \rightarrow \PPi(\tilde{V})$. (Note that $\PPi$ thus reverses the order on $\mc{V}(\mc{H})$.) We also require that probability distributions admit dilations in contexts. Mathematically, this corresponds to applying Naimark's theorem in contexts \cite{Naimark1943}. In particular, by Gelfand duality we may identify every context $V \in \mc{V}(\mc{H})$ with a compact Hausdorff space such that $\mu_V$ becomes a positive operator-valued measure (of $1 \times 1$-matrices) with spectral dilation $v^*\phi_V v$, where $v \in \mc{K}$ defines a linear map $v: \mathbb{C} \rightarrow \mc{K}$ by scalar multiplication.
Importantly, in the Supplementary Material we show that global sections of $\PPi(\mc{V}(\mc{H}))$ bijectively correspond with quantum states by Gleason's theorem.

In order to extend this to the bipartite case, we further need to consider composition of contexts. The canonical product on partial orders, denoted $\mc{V}_1 \times \mc{V}_2$, is the cartesian product with elements $(V_1,V_2)$ for $V_1 \in \mc{V}_1$, $V_2 \in \mc{V}_2$ and order relations such that for all $\tilde{V}_1,V_1 \in \mc{V}_1$, $\tilde{V}_2,V_2 \in \mc{V}_2$:
\begin{equation}\label{eq: product context category}
    (\tilde{V}_1,\tilde{V}_2) \subseteq (V_1,V_2) :\Longleftrightarrow \tilde{V}_1 \subseteq_1 V_1 \mathrm{\ and \ } \tilde{V}_2 \subseteq_2 V_2\; .
\end{equation}
We define the \emph{Bell presheaf} as the dilated probabilistic presheaf $\PPi(\mc{V}(\mc{H}_1) \times \mc{V}(\mc{H}_2))$ over product contexts. As a consequence of Thm.~\ref{thm: non-signalling frame functions correspond to self-adjoint operators PVM}, we find that no-signalling is contained in the marginalisation maps between product contexts (for more details, see \cite{FreDoe19a}).

\begin{proposition}\label{prop: Bell-type Gleason for global sections}
    Let $\mc{H} = \mc{H}_1 \otimes \mc{H}_2$ with $\mathrm{dim}(\mc{H}_i) \geq 3$ finite. For every global section of the Bell presheaf $\gamma \in \Gamma(\PPi(\mc{V}(\mc{H}_1) \times \mc{V}(\mc{H}_2)))$ there exists a self-adjoint operator $t: \mc{H} \rightarrow \mc{H}$ of unit trace such that $\gamma(p \otimes q) = \mathrm{tr}(t (p \otimes q)) \geq 0$ for all $p \in \mc{P}(\mc{H}_1)$, $q \in \mc{P}(\mc{H}_2)$.
\end{proposition}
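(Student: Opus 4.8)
The plan is to reduce Proposition~\ref{prop: Bell-type Gleason for global sections} to Theorem~\ref{thm: non-signalling frame functions correspond to self-adjoint operators PVM} by extracting, from a global section $\gamma=(\mu^\gamma_{(V_1,V_2)})_{(V_1,V_2)}$ of the Bell presheaf, a non-negative non-signalling frame function over product bases. First I would set $f(v_1\otimes v_2):=\mu^\gamma_{(V_1,V_2)}(p_{v_1}\otimes p_{v_2})$ for $v_1\otimes v_2\in\sigma(\mc{H})$, where $(V_1,V_2)\in\mc{V}(\mc{H}_1)\times\mc{V}(\mc{H}_2)$ is \emph{any} product context with $p_{v_1}\in\mc{P}(V_1)$ and $p_{v_2}\in\mc{P}(V_2)$ --- for instance the pair $(V_{v_1},V_{v_2})$ of minimal contexts generated by the two rank-one projections. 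This is well defined: any admissible $(V_1,V_2)$ satisfies $(V_{v_1},V_{v_2})\subseteq(V_1,V_2)$ (Eq.~(\ref{eq: product context category})), and $p_{v_1}\otimes p_{v_2}\in\mc{P}(V_{v_1}\otimes V_{v_2})$, so by the marginalisation condition in Def.~\ref{def: dilated probabilistic presheaf} its $\mu^\gamma_{(V_1,V_2)}$-value equals $\mu^\gamma_{(V_{v_1},V_{v_2})}(p_{v_1}\otimes p_{v_2})$, independently of the choice. Non-negativity of $f$ is immediate, since each $\mu^\gamma_{(V_1,V_2)}$ takes values in $[0,1]$.

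Next I would verify the two structural constraints. For the product-basis frame condition, fix ONBs $(v_{j_i,i})_{j_i=1}^{d_i}$ of $\mc{H}_i$: the mutually orthogonal projections $p_{v_{j_1,1}}\otimes p_{v_{j_2,2}}$ all lie in the single product context $(V_1,V_2)$ with $V_i$ generated by $\{p_{v_{j_i,i}}\}_{j_i}$ and they sum to $1$, so finite additivity and normalisation of the probability measure $\mu^\gamma_{(V_1,V_2)}$ on $\mc{P}(V_1\otimes V_2)$ give $\sum_{j_1,j_2}f(v_{j_1,1}\otimes v_{j_2,2})=\mu^\gamma_{(V_1,V_2)}(1)=1$; thus $f$ is a frame function over product bases of weight $W=1$. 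For no-signalling, Eq.~(\ref{eq: no-signalling constraints}), fix $x_1\in S(\mc{H}_1)$ and an ONB $(v_{j_2,2})_{j_2}$ of $\mc{H}_2$: additivity gives $\sum_{j_2}f(x_1\otimes v_{j_2,2})=\mu^\gamma_{(V_{x_1},V_2)}(p_{x_1}\otimes 1)$, and since $p_{x_1}\otimes 1$ already belongs to $\mc{P}(V_{x_1}\otimes\mathbb{C}1)$, restricting along $(V_{x_1},\mathbb{C}1)\subseteq(V_{x_1},V_2)$ shows this equals $\mu^\gamma_{(V_{x_1},\mathbb{C}1)}(p_{x_1}\otimes 1)$, a quantity that makes no reference to the chosen ONB of $\mc{H}_2$. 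The same argument with the two tensor factors interchanged handles the second line of Eq.~(\ref{eq: no-signalling constraints}); for $n=2$ these are all the constraints, so $f$ is non-signalling.

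At this point Theorem~\ref{thm: non-signalling frame functions correspond to self-adjoint operators PVM} applies (by hypothesis $\dim(\mc{H}_i)\geq 3$) and produces a self-adjoint $t:\mc{H}\to\mc{H}$ with $f(v)=\langle v|t|v\rangle$ for all $v\in\sigma(\mc{H})$. Evaluating on a product basis gives $\mathrm{tr}(t)=\sum_{j_1,j_2}f(v_{j_1,1}\otimes v_{j_2,2})=W=1$, so $t$ has unit trace. Finally, for arbitrary $p\in\mc{P}(\mc{H}_1)$, $q\in\mc{P}(\mc{H}_2)$ the projection $p\otimes q$ lies in $\mc{P}(V_1\otimes V_2)$ for any $V_1\ni p$, $V_2\ni q$, so $\gamma(p\otimes q):=\mu^\gamma_{(V_1,V_2)}(p\otimes q)$ is well defined exactly as for $f$; writing $p=\sum_a p_{v_a}$ and $q=\sum_b p_{w_b}$ as sums of rank-one projections and using additivity of $\mu^\gamma$ in the context generated by all the $p_{v_a},p_{w_b}$ yields $\gamma(p\otimes q)=\sum_{a,b}f(v_a\otimes w_b)=\sum_{a,b}\mathrm{tr}(t(p_{v_a}\otimes p_{w_b}))=\mathrm{tr}(t(p\otimes q))$, and $\gamma(p\otimes q)\geq 0$ because it is a value of a probability measure. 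Note that one does not --- and cannot --- claim $t\geq 0$; only positivity of $t$ on product projections is asserted.

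The conceptual heart, and the step I would write out most carefully, is the no-signalling verification: recognising that the marginalisation maps of the Bell presheaf \emph{already} encode Eq.~(\ref{eq: no-signalling constraints}) through the common coarsening $(V_{x_1},\mathbb{C}1)$ of the product contexts $(V_{x_1},V_2)$ as $V_2$ varies. The well-definedness bookkeeping, the finite-additivity arguments and the trace normalisation are routine in finite dimensions, and all the genuine analytic content is imported wholesale from Theorem~\ref{thm: non-signalling frame functions correspond to self-adjoint operators PVM}.
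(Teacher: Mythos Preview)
Your proposal is correct and follows essentially the same route as the paper: extract from $\gamma$ a non-negative frame function on $\sigma(\mc{H})$, verify the relevant constraints via the marginalisation maps of the presheaf, and then invoke the Gleason-type result to obtain $t$. The paper phrases this via Thm.~\ref{thm: frame functions and probabilistic presheaf} (global sections $\leftrightarrow$ frame functions over \emph{twisted} product bases) together with Prop.~\ref{prop: Wallach for twisted product bases}, whereas you go through \emph{non-signalling} product frame functions and Thm.~\ref{thm: non-signalling frame functions correspond to self-adjoint operators PVM}; since Thm.~\ref{thm: non-signalling frame functions correspond to self-adjoint operators PVM} is precisely Prop.~\ref{prop: Wallach for twisted product bases} combined with Lm.~\ref{lm: product bases + ns}, the two arguments coincide. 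One cosmetic difference: for the no-signalling step the paper marginalises through a context of the form $V_1 \times \{p_{1,2}+p_{2,2},(p_{1,2}+p_{2,2})^\perp\}$ (mirroring the two-dimensional twists that generate $T(\beta(\mc{H}))$), while you marginalise all the way down to $(V_{x_1},\mathbb{C}1)$, which is a slightly cleaner way to exhibit basis-independence of the partial sum.
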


Prop.~\ref{prop: Bell-type Gleason for global sections}
shows that no-signalling \emph{almost} restricts global sections of the Bell presheaf $\PPi(\mc{V}(\mc{H}_1) \times \mc{V}(\mc{H}_2))$
to density matrices. In fact, we will soon strengthen Prop.~\ref{prop: Bell-type Gleason for global sections} to a bijective correspondence via Lm.~\ref{lm: global sections to Jordan homos} below. In order to do so, we first need to consider Jordan algebras.

\paragraph*{Jordan Algebras.} Let $\mc{J}(\mc{H}) = (\mc{L}_\mathrm{sa}(\mc{H}),\{\cdot,\cdot\})$ denote the \emph{Jordan algebra} corresponding to $\mc{H}$, i.e., the set of self-adjoint matrices $\mc{L}_\mathrm{sa}(\mc{H}) := \{a \in \mc{L}(\mc{H}) \mid a^* = a\}$ with product given by the \emph{anticommutator} $\{a,b\} = ab + ba$ for all $a,b \in \mc{L}_\mathrm{sa}(\mc{H})$. $\mc{J}(\mc{H})$ extends to the complexification $\mc{L}_\mathrm{sa}(\mc{H}) + i\mc{L}_\mathrm{sa}(\mc{H})$ and we will denote the complexified algebra by $\mc{J}(\mc{H})$ also. A Jordan ($*$-)~homomorphism $\Phi: \mc{J}_1 \rightarrow \mc{J}_2$ is a linear map such that $\Phi(\{a,b\}) = \{\Phi(a),\Phi(b)\}$ (and $\Phi(a^*) = \Phi(a)^*$) for all $a,b \in \mc{J}_1$. Clearly, $\mc{J}(\mc{H})$ contains less information than $\mc{L}(\mc{H})$, it misses the antisymmetric part, i.e., the \emph{commutator} $[a,b] = ab - ba$, in the associative product $ab = \frac{1}{2}(ab + ba) + \frac{1}{2}(ab - ba)$ for $a,b \in \mc{L}(\mc{H})$. Conversely, it turns out that commutators can be added to $\mc{J}(\mc{H})$ in only two ways (cf. \cite{Kadison1951,AlfsenShultz1998}), which are distinguished by a sign choice: $\frac{1}{2}(ab + ba) \pm \frac{1}{2}(ab - ba)$. We denote the corresponding associative algebras by
\begin{align*}
    \mc{L}(\mc{H}) &:= \{a \in \mc{J}(\mc{H}) \mid a \circ b = \frac{1}{2}\{a,b\} + \frac{1}{2}[a,b] = ab\}\; , \\
    \overline{\mc{L}(\mc{H})} &:= \{a \in \mc{J}(\mc{H}) \mid a \circ b = \frac{1}{2}\{a,b\} - \frac{1}{2}[a,b] = ba\}\; .
\end{align*}

Clearly, $\mc{V}(\mc{L}(\mc{H})) = \mc{V}(\overline{\mc{L}(\mc{H})}) = \mc{V}(\mc{H})$. In fact, $\mc{V}(\mc{H})$ already determines the Jordan algebra structure of $\mc{L}(\mc{H})$ \cite{DoeringHarding2016}. This allows us to prove a refinement of Prop.~\ref{thm: non-signalling frame functions correspond to self-adjoint operators PVM}.

\begin{lemma}\label{lm: global sections to Jordan homos}
    Let $\mc{H} = \mc{H}_1 \otimes \mc{H}_2$ with $\mathrm{dim}(\mc{H}_i) \geq 3$ finite. For every global section of the Bell presheaf $\gamma \in \Gamma(\PPi(\mc{V}(\mc{H}_1) \times \mc{V}(\mc{H}_2)))$ there exists a unique linear map $\phi^\gamma: \mc{L}(\mc{H}_1) \rightarrow \mc{L}(\mc{H}_2)$, moreover, a Hilbert space $\mc{K}$, a linear map $v: \mc{H}_2 \rightarrow \mc{K}$, and a Jordan $*$-homomorphism $\Phi^\gamma: \mc{J}(\mc{H}_1) \rightarrow \mc{J}(\mc{K})$ such that $\phi^\gamma = v^* \Phi^\gamma v$.
\end{lemma}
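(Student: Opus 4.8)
The strategy is to build $\phi^\gamma$ from the correlation function $\gamma(p\otimes q)$ on projections, extend by linearity to all self-adjoint operators, and then recognise the resulting map as a $v^*\Phi^\gamma v$ dilation using the Jordan-algebraic characterisation of $\mc{V}(\mc{H}_1)$. First, by Prop.~\ref{prop: Bell-type Gleason for global sections} there is a self-adjoint $t$ of unit trace with $\gamma(p\otimes q)=\mathrm{tr}(t(p\otimes q))$ for all $p\in\mc{P}(\mc{H}_1)$, $q\in\mc{P}(\mc{H}_2)$. Partial-trace over the first factor against $t$ defines a linear map $\phi^\gamma:\mc{L}(\mc{H}_1)\rightarrow\mc{L}(\mc{H}_2)$ by $\mathrm{tr}(\phi^\gamma(a)\,b):=\mathrm{tr}(t(a\otimes b))$; since projections span $\mc{L}(\mc{H}_1)$, this is the unique linear map reproducing the values $\gamma(p\otimes q)$, which gives the uniqueness claim. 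The substantive point is that $\phi^\gamma$ is not merely linear but arises as a compression of a Jordan $*$-homomorphism. To see this, I would restrict attention to a single context $V_1\in\mc{V}(\mc{H}_1)$: the family $(\gamma|_{(V_1,V_2)})_{V_2}$ is, by the global-section property, a consistent family of probability distributions on $\mc{V}(\mc{H}_2)$ for each fixed outcome in $V_1$, hence (by Gleason in $\mc{H}_2$, as used already for $\PPi$) corresponds to a positive functional on $\mc{L}(\mc{H}_2)$; letting the outcome vary over $\mc{P}(V_1)$ gives a positive-operator-valued measure on $V_1$ with values in $\mc{L}(\mc{H}_2)$, and by Naimark's theorem applied \emph{in the context} $V_1$ this POVM dilates to a PVM, i.e.\ to a $*$-homomorphism $V_1\hookrightarrow\mc{L}(\mc{K})$ composed with compression by some $v$.

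The task is then to glue these context-wise Naimark dilations into a single Jordan $*$-homomorphism on all of $\mc{J}(\mc{H}_1)$. Here I would invoke the key structural input quoted in the paper: $\mc{V}(\mc{H}_1)$ determines the Jordan algebra structure of $\mc{L}(\mc{H}_1)$ \cite{DoeringHarding2016}, and correspondingly a compatible assignment of PVMs on each context — compatible in the sense of agreeing on intersections $V_1\cap V_1'$ and respecting the coarse-graining order — assembles into a Jordan $*$-homomorphism $\Phi^\gamma:\mc{J}(\mc{H}_1)\rightarrow\mc{J}(\mc{K})$ out of a single fixed dilation space $\mc{K}$ (using that $\mc{K}$ can be taken context-independent, as noted in the footnote to Def.~\ref{def: dilated probabilistic presheaf}, since $\dim\mc{K}\geq\dim\mc{L}(\mc{H}_1)$ suffices). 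Compatibility of the context-wise dilations is exactly where the global-section condition on $\PPi(\mc{V}(\mc{H}_1)\times\mc{V}(\mc{H}_2))$ does its work: marginalisation $\PPi(\tilde V_1,V_2)\subseteq(V_1,V_2)$ forces the POVMs attached to $\tilde V_1\subseteq V_1$ to be coarse-grainings of one another, and this is precisely the condition under which their Naimark dilations can be chosen coherently. Having fixed $\Phi^\gamma$, one reads off $v:\mc{H}_2\rightarrow\mc{K}$ from the Naimark compression (the $v$ of Def.~\ref{def: dilated probabilistic presheaf} for the second factor), and checks $\phi^\gamma=v^*\Phi^\gamma v$ by evaluating both sides against $p\otimes q$ and matching with $\gamma$.

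The main obstacle I anticipate is the gluing step: producing a \emph{single} dilation space $\mc{K}$ and a \emph{single} map $v$ that simultaneously Naimark-dilates the POVM in every context $V_1$, rather than a context-dependent zoo of Naimark spaces. Two things make this tractable. First, the uniqueness half of the lemma pins down $\phi^\gamma$ irrespective of the dilation, so one only needs \emph{existence} of a coherent family; second, $\mc{L}(\mc{H}_1)$ is finite-dimensional with finitely many ``maximal'' contexts, so one can dilate on a maximal context $V_1^{\max}$ containing a given $V_1$ and restrict, the global-section condition guaranteeing the restriction is independent of which maximal context one chose. Packaging this — ideally citing the presheaf-level Gleason statement asserted just before the bipartite discussion, so that the single-factor case $\Gamma(\PPi(\mc{V}(\mc{H})))\cong$ quantum states is available as a black box and applied fibrewise over $\mc{V}(\mc{H}_1)$ — is what turns the argument from a sketch into a proof. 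The Jordan (rather than associative) target is forced precisely because the dilation is a $*$-homomorphism of the order-theoretic/Jordan data only: the sign ambiguity $\tfrac12\{a,b\}\pm\tfrac12[a,b]$ distinguishing $\mc{L}(\mc{H})$ from $\overline{\mc{L}(\mc{H})}$ is invisible to $\mc{V}(\mc{H}_1)$ and hence to $\gamma$, which is exactly the residual freedom that the ``arrow of time'' consistency condition in the abstract will later be used to eliminate.
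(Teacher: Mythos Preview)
Your three-step plan (Gleason on $\mc{H}_2$ fibrewise to get a POVM on each context $V_1$; Naimark-dilate contextwise; glue to a Jordan $*$-homomorphism) matches the paper's proof, and invoking Prop.~\ref{prop: Bell-type Gleason for global sections} for uniqueness of $\phi^\gamma$ is a legitimate shortcut. The gap is in your gluing step, where two things go wrong.

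First, the assertion that $\mc{L}(\mc{H}_1)$ has ``finitely many maximal contexts'' is false: the maximal abelian subalgebras of $M_d(\mathbb{C})$ are parametrised by orthonormal bases up to phases and permutation, a continuous family. So dilating on each maximal context separately and then restricting does not by itself produce a single $v$ and a single coherent family $(\varphi^\gamma_{V_1})_{V_1}$; compatibility across the continuum of maximal contexts is precisely the problem, and your argument does not address it. The paper resolves this not through maximal contexts at all but by reading it off the \emph{definition} of the dilated presheaf: the restriction maps in Def.~\ref{def: dilated probabilistic presheaf} are $v^*\varphi_V v \mapsto v^*\varphi_V|_{\tilde V} v$, so a global section of the Bell presheaf \emph{by construction} carries a coherent family of embeddings with a fixed $\mc{K}$ and $v$, yielding an orthomorphism $\varphi^\gamma:\mc{P}(\mc{H}_1)\rightarrow\mc{P}(\mc{K})$ directly. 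Second, the tool that lifts this orthomorphism to a Jordan $*$-homomorphism is not D\"oring--Harding (which only says $\mc{V}(\mc{H})$ determines $\mc{J}(\mc{H})$ as an abstract Jordan algebra) but the Bunce--Wright variant of Dye's theorem \cite{BunceWright1992,BunceWright1993}: an orthomorphism between projection lattices extends linearly (operator-valued Gleason) and, because it preserves orthogonality and hence squares of self-adjoint elements, to a Jordan homomorphism. With these two corrections your sketch becomes the paper's proof.
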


Now note that by Stinespring's theorem, a linear map $\phi: \mc{L}(\mc{H}_1) \rightarrow \mc{L}(\mc{H}_2)$ is completely positive if and only if it is of a similar form, namely $\phi = v^* \Phi v$ with $\Phi: \mc{L}(\mc{H}_1) \rightarrow \mc{L}(\mc{K})$ a $*$-homomorphism, i.e., $\Phi(ab) = \Phi(a)\Phi(b)$ and $\Phi(a^*) = \Phi(a)^*$ for all $a,b \in \mc{L}(\mc{H}_1)$. Furthermore, by Choi's theorem, trace-preserving completely positive maps correspond to quantum states. More precisely, the matrix $(\rho_\phi)_{ij} = \phi(E_{ij})$ arising from the Choi-Jamio\l kowski isomorphism applied to $\phi$, where $E_{ij} \in \mc{L}(\mc{H}_1)$ denotes the matrix with entry $1$ in position $(i,j)$ and $0$ otherwise, is positive if and only if $\phi$ is completely positive. By the correspondence between states and density matrices in finite dimensions,
$\rho_\phi$ therefore corresponds to the state $\sigma_\phi := \mathrm{tr}(\rho_\phi \cdot \underline{\ \ }) \in \mc{S}(\mc{L}(\mc{H}_1) \otimes \mc{L}(\mc{H}_2))$.

Clearly, $\phi^\gamma = v^* \Phi^\gamma v: \mc{L}(\mc{H}_1) \rightarrow \mc{L}(\mc{H}_2)$ in Lm.~\ref{lm: global sections to Jordan homos} is positive, however, it is generally not completely positive. Consequently, no-signalling in Thm.~\ref{thm: non-signalling frame functions correspond to self-adjoint operators PVM}, equivalently, the mere order structure on product contexts, is not enough to single out quantum states on $\mc{L}(\mc{H}) = \mc{L}(\mc{H}_1) \otimes \mc{L}(\mc{H}_2)$.
Nevertheless, Lm.~\ref{lm: global sections to Jordan homos} highlights just what is missing for global sections to bijectively correspond to quantum states. Namely, $\Phi^\gamma$ needs to lift from a Jordan $*$-homomorphism to a $*$-homomorphism. Since the former already preserves anticommutators, this is equivalent to imposing a consistency condition between commutators in $\mc{L}(\mc{H}_1)$ and $\mc{L}(\mc{H}_2)$. Finally, since commutators generate infinitesimal time evolution, and Jordan $*$-homomorphisms already fix commutators up to sign (cf. \cite{AlfsenShultz1998}), this extra information can be interpreted as a choice of local time orientation.

\paragraph*{Local Time Orientations.} We define a \emph{time orientation on $\mc{V}(\mc{H})$} as a map $\psi: \mathbb{R} \times \mc{L}_\mathrm{sa}(\mc{H}) \rightarrow \mathrm{Aut}(\mc{V}(\mc{H}))$,
where $t \mapsto \psi(t,a)$ defines a one-parameter group of order automorphisms for every $a \in \mc{L}_\mathrm{sa}(\mc{H})$.
By Wigner's theorem, every $\psi(t,a) \in \mathrm{Aut}(\mc{V}(\mc{H}))$
is given by
conjugation with a unitary or anti-unitary operator, $\psi(t,a)(V) = U(t,a)VU(t,a)^*$.
Moreover, since every anti-unitary operator is the product of a unitary operator $e^{ita}$ and the time reversal operator, we may interpret the parameter $t$ as time and observe that $\psi$ fixes the forward time direction of the system described by $\mc{V}(\mc{H})$.
In particular, given the associative algebra $\mc{L}(\mc{H})$, we may define a \emph{canonical time orientation} by $\psi(t,a)(V) = e^{ita}Ve^{-ita}$, which relates to the commutator in $\mc{L}(\mc{H})$ via $\frac{d}{dt}|_{t=0} (e^{ita}be^{-ita}) = i[a,b]$.

We define the \emph{time-oriented partial order of contexts} $\widetilde{\mc{V}(\mc{H})}$ as $\mc{V}(\mc{H})$ together with a time orientation $\psi$. Clearly, for the product context order $\mc{V}(\mc{H}_1) \times \mc{V}(\mc{H}_2)$ orientations exist on each subsystem independently, $\psi = (\psi_1,\psi_2)$. Finally, we say a global section 
$\gamma \in \Gamma(\PPi(\mc{V}(\mc{H}_1) \times \mc{V}(\mc{H}_2)))$ is \emph{orientation-preserving} if it preserves this time orientation, i.e.,
\begin{equation*}\label{eq: Jordan to vN condition}
    \forall t \in \mathbb{R}, a \in \mc{L}_\mathrm{sa}(\mc{H}_1):\ \Phi^\gamma \circ \psi_1(t,a) = \psi_2(t,\Phi^\gamma(a)) \circ \Phi^\gamma\; , 
\end{equation*}
where $\Phi^\gamma$ is the Jordan $*$-homomorphism in Lm.~\ref{lm: global sections to Jordan homos}.
For more details on time orientations we refer to \cite{Doering2014} (and Supp. Mat.). With these definitions in place,
we are ready to lift Prop.~\ref{prop: Bell-type Gleason for global sections} to a bijective correspondence.

\vspace{-0.1cm}
\begin{theorem}\label{thm: Bell-type Gleason}
    Let $\mc{H} = \mc{H}_1 \otimes \mc{H}_2$ with $\mathrm{dim}(\mc{H}_i) \geq 3$ finite. There is a bijective correspondence between the set of orientation-preserving global sections of the Bell presheaf $\Gamma(\PPi(\widetilde{\mc{V}(\mc{H}_1)} \times \widetilde{\mc{V}(\mc{H}_2)}))$ and (quantum) states $\mc{S}(\mc{L}(\mc{H})) = \mc{S}(\mc{L}(\mc{H}_1) \otimes \mc{L}(\mc{H}_2))$.
\end{theorem}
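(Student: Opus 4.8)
The plan is to assemble the bijection by combining Lemma~\ref{lm: global sections to Jordan homos} with the Stinespring--Choi characterisation of quantum states, using the orientation-preserving condition precisely to upgrade the Jordan $*$-homomorphism $\Phi^\gamma$ to a genuine $*$-homomorphism. Concretely, given an orientation-preserving global section $\gamma \in \Gamma(\PPi(\widetilde{\mc{V}(\mc{H}_1)} \times \widetilde{\mc{V}(\mc{H}_2)}))$, Lemma~\ref{lm: global sections to Jordan homos} already furnishes a Hilbert space $\mc{K}$, a linear map $v: \mc{H}_2 \to \mc{K}$, and a Jordan $*$-homomorphism $\Phi^\gamma: \mc{J}(\mc{H}_1) \to \mc{J}(\mc{K})$ with $\phi^\gamma = v^* \Phi^\gamma v$. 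The first step is to show that the orientation-preserving condition $\Phi^\gamma \circ \psi_1(t,a) = \psi_2(t,\Phi^\gamma(a)) \circ \Phi^\gamma$ forces $\Phi^\gamma$ to be compatible with the commutators: differentiating at $t=0$ and using $\frac{d}{dt}\big|_{t=0} e^{ita} b e^{-ita} = i[a,b]$ on both sides gives $\Phi^\gamma(i[a,b]) = i[\Phi^\gamma(a), \Phi^\gamma(b)]$ (interpreting the action on contexts at the level of the algebra generators via Wigner's theorem), i.e.\ $\Phi^\gamma$ preserves commutators and not merely anticommutators. Since a linear map preserving both $\{a,b\}$ and $[a,b]$ preserves the associative product $ab = \tfrac12\{a,b\} + \tfrac12[a,b]$, this promotes $\Phi^\gamma$ to a $*$-homomorphism $\mc{L}(\mc{H}_1) \to \mc{L}(\mc{K})$.

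Once $\Phi^\gamma$ is a $*$-homomorphism, Stinespring's theorem (invoked in the form already cited in the excerpt) identifies $\phi^\gamma = v^* \Phi^\gamma v: \mc{L}(\mc{H}_1) \to \mc{L}(\mc{H}_2)$ as a completely positive map; the normalisation $\gamma(1 \otimes 1) = 1$ together with the unit-trace condition from Prop.~\ref{prop: Bell-type Gleason for global sections} makes it trace-preserving (after the appropriate transposition/duality bookkeeping between $\phi^\gamma$ and its Choi matrix). Then Choi's theorem gives that the Choi matrix $(\rho_{\phi^\gamma})_{ij} = \phi^\gamma(E_{ij})$ is positive, hence defines a state $\sigma_{\phi^\gamma} \in \mc{S}(\mc{L}(\mc{H}_1) \otimes \mc{L}(\mc{H}_2))$, and one checks $\sigma_{\phi^\gamma}(p \otimes q) = \gamma(p \otimes q)$ on product projections, which by linearity and the density of (spans of) product projections in $\mc{L}(\mc{H}_1) \otimes \mc{L}(\mc{H}_2)$ pins down $\sigma_{\phi^\gamma}$ uniquely from $\gamma$. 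This gives a well-defined map from orientation-preserving global sections to states.

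For the converse direction and injectivity, I would start from a state $\sigma \in \mc{S}(\mc{L}(\mc{H}))$, form its density matrix, run the Choi--Jamio\l kowski isomorphism backwards to obtain a trace-preserving completely positive $\phi: \mc{L}(\mc{H}_1) \to \mc{L}(\mc{H}_2)$, and define $\gamma_\sigma(p \otimes q) := \sigma(p \otimes q)$; the content here is that $\gamma_\sigma$, restricted to each product context $(V_1, V_2)$, is a bona fide probability distribution admitting the required Naimark dilation (which it does via the Stinespring dilation of $\phi$), and that it is orientation-preserving because the Stinespring $*$-homomorphism $\Phi$ of a CP map automatically intertwines the canonical unitary time evolutions $e^{ita}(\cdot)e^{-ita}$ on the two sides. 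One then verifies that the two constructions $\gamma \mapsto \sigma_{\phi^\gamma}$ and $\sigma \mapsto \gamma_\sigma$ are mutually inverse: the uniqueness clause in Lemma~\ref{lm: global sections to Jordan homos} gives that $\phi^\gamma$ is the unique map representing $\gamma$, so reconstructing $\gamma$ from $\sigma_{\phi^\gamma}$ returns the original, and conversely Choi's isomorphism is a bijection onto states.

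The main obstacle I anticipate is the first step: carefully justifying that the orientation-preserving condition, which is phrased as an identity of order automorphisms of $\mc{V}(\mc{H}_i)$ built from (anti)unitaries via Wigner's theorem, really does translate into the algebraic identity $\Phi^\gamma([a,b]) = [\Phi^\gamma(a),\Phi^\gamma(b)]$ rather than something weaker or ambiguous up to the anti-unitary sign. One has to rule out that $\psi_2(t,\Phi^\gamma(a))$ is implemented anti-unitarily while $\psi_1(t,a)$ is unitary (or vice versa) in a way that flips the commutator sign inconsistently across different $a$; the point is that $\Phi^\gamma$ being a fixed Jordan $*$-homomorphism already determines the commutator structure globally \emph{up to a single overall sign} (by \cite{AlfsenShultz1998}), so the orientation-preserving condition only needs to fix that one sign, and it does so by matching the distinguished forward-time directions $\psi_1, \psi_2$. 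Making this rigorous — in particular handling the passage from the differentiated one-parameter-group identity to the statement that $\Phi^\gamma$ lands in $\mc{L}(\mc{K})$ rather than $\overline{\mc{L}(\mc{K})}$ — is where the real work lies; the rest is assembling known dilation theorems.
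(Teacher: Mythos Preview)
Your proposal is correct and follows essentially the same route as the paper's proof: both directions use the Choi--Jamio\l kowski isomorphism together with Stinespring's theorem, and the passage from Jordan $*$-homomorphism to $*$-homomorphism is effected precisely by the orientation-preserving condition via Lemma~\ref{lm: global sections to Jordan homos}. The paper's own argument is in fact more terse than yours --- it simply asserts that orientation-preservation lifts $\Phi^\gamma$ to a $*$-homomorphism without spelling out the differentiation-at-$t=0$ mechanism or the sign-rigidity argument from \cite{AlfsenShultz1998} that you (correctly) identify as the substantive point.
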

\vspace{-0.1cm}

In fact, every global section of the Bell presheaf, $\gamma \in \Gamma(\PPi(\mc{V}(\mc{H}_1) \times \mc{V}(\mc{H}_2)))$, corresponds to a quantum state since by the classification in \cite{Kadison1951,AlfsenShultz1998} we can always lift $\Phi^\gamma$ to a $*$-homomorphism by choosing orientations on $\mc{V}(\mc{H}_1)$ and $\mc{V}(\mc{H}_2)$ appropriately.
This improves a previous result in \cite{KlayRandallFoulis1987,Wehner2010}: locally quantum non-signalling correlations always admit a quantum description.



\paragraph*{Conclusion.} We discussed the physical principle of no-signalling and its relation to contextuality.
Our main theorem, Thm.~\ref{thm: Bell-type Gleason}, is a generalisation of Gleason's theorem over product contexts, which complements earlier results in \cite{KlayRandallFoulis1987,Wallach2002,Wehner2010}. In particular, we related non-signalling correlations with quantum states unambiguously. Succinctly: (i) no-signalling arises via marginalisation constraints on probability distributions in product contexts, (ii)
complete positivity reduces to marginalisation for dilations in product contexts and a consistency condition between local time orientations,
and (iii)
non-signalling dilations over product contexts bijectively correspond with quantum states for appropriate time orientations.



\paragraph*{Acknowledgements.} We thank Vicky Wright for pointing us to \cite{Wallach2002} 
and Nolan Wallach for helpful discussions. This work is supported through a studentship in the Centre for Doctoral Training on Controlled Quantum Dynamics at Imperial College funded by the EPSRC.

\bibliographystyle{apsrev4-1}
\bibliography{bibliography}


\appendix

\section{Proof of Theorem \ref{thm: non-signalling frame functions correspond to self-adjoint operators PVM}}\label{sec: Proof of Theorem 1}

In this section we provide some background on frame functions on composite systems. By generalising the concept of (non-negative) unentangled frame functions in \cite{Wallach2002} to (non-negative) frame functions over twisted product bases, this will allow us to relate with locally quantum, non-signalling probability distributions as introduced in \cite{KlayRandallFoulis1987,Wehner2010}, which is the content of Thm.~\ref{thm: non-signalling frame functions correspond to self-adjoint operators PVM}.

To this end, we represent local observables by self-adjoint operators $a \in \mc{L}_\mathrm{sa}(\mc{H})$, in particular, we take their outcomes to correspond to closed subspaces in the Hilbert space $\mc{H}$ spanned by basis vectors in orthonormal bases $v \in \mathrm{ONB}(\mc{H})$ or, equivalently, sets of orthogonal projections $p,p' \in \mc{P}(\mc{H})$, $pp'=0$. In taking the set of all outcomes to correspond to projections in Hilbert space, the following question arises: What measures exist on this set? A \emph{measure} in this setting is a map $\mu: \mc{P}(\mc{H}) \rightarrow [0,1]$ such that $\mu(p+p') = \mu(p) + \mu(p')$ whenever $p,p' \in \mc{P}(\mc{H})$, $pp'=0$ and $\mu(1) = 1$. Proposed by Mackey, it prompted Gleason to prove the following powerful result \cite{Gleason1975}.

\begin{theorem}\label{thm: Gleason theorem}
    \textbf{\emph{(Gleason \cite{Gleason1975})}} Let $\mu$ be a measure on the projections of a (real or complex) Hilbert space $\mc{H}$ of finite dimension $\mathrm{dim}(\mc{H}) \geq 3$. Then there exists a density matrix $\rho: \mc{H} \rightarrow \mc{H}$ such that for all projections $p \in \mc{P}(\mc{H})$,
    \begin{equation*}
        \mu(p) = \mathrm{tr}(\rho p)\; .
    \end{equation*}
\end{theorem}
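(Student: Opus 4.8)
The plan is to turn $\mu$ into a frame function and then prove Gleason's regularity statement for frame functions. Additivity forces $\mu(0) = 0$, and for every orthonormal basis $(v_j)_{j=1}^d$ one has $\sum_{j=1}^d \mu(|v_j\rangle\langle v_j|) = \mu(1) = 1$; hence $f: S(\mc{H}) \rightarrow [0,1]$, $f(v) := \mu(|v\rangle\langle v|)$, is a non-negative frame function of weight $W = 1$. The task is therefore to show that every non-negative frame function is \emph{regular}, i.e.\ of the form $f(v) = \langle v|\rho|v\rangle$ for some self-adjoint $\rho$: non-negativity then gives $\rho \geq 0$, the weight gives $\mathrm{tr}(\rho) = 1$, and for a general projection $p$ one recovers $\mu(p) = \mathrm{tr}(\rho p)$ by summing $f$ over an orthonormal basis adapted to $1 = p + (1-p)$. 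From here on $\mu$ disappears and everything is phrased in terms of $f$.

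Next I would reduce to the case $\dim \mc{H} = 3$ over $\mathbb{R}$. For a real Hilbert space of dimension $\geq 4$ this is the usual gluing argument: granting the three-dimensional result, the restriction of $f$ to the unit sphere of every real $3$-dimensional subspace $\mc{K}$ is a quadratic form $v \mapsto \langle v|\rho_{\mc{K}}|v\rangle$, the operators $\rho_{\mc{K}}$ agree on overlaps because a quadratic form on $S^2$ determines its symmetric operator uniquely, and they patch to a single self-adjoint $\rho$ on $\mc{H}$. The complex case follows from the real one by a further, slightly more involved argument, recognising the real quadratic forms obtained on real subspaces as the diagonal of a single Hermitian form, using the phase invariance $f(e^{i\theta}v) = f(v)$ (see \cite{Gleason1975}). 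These reductions aside, the substance is the three-dimensional real case.

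For $\mc{H} = \mathbb{R}^3$ I would argue in two steps. \textbf{(a) Continuity.} A non-negative frame function on $S^2$ is bounded ($0 \leq f \leq W$) and, in fact, continuous; establishing this is the delicate geometric core of \cite{Gleason1975}, where one bounds $|f(u) - f(v)|$ for nearby $u,v \in S^2$ by fitting them into carefully chosen orthonormal triples and iterating the frame identity. \textbf{(b) Harmonic analysis.} Fixing $v \in S^2$ and averaging the frame identity $f(e_1) + f(e_2) + f(e_3) = W$ over all orthonormal triples with $e_1 = v$ yields the functional equation $f(v) + 2(Af)(v) = W$, where $(Af)(v)$ is the mean of $f$ over the great circle orthogonal to $v$. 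The operator $A$ is $SO(3)$-equivariant, hence by Schur's lemma acts as a scalar $\lambda_\ell = P_\ell(0)$ (the degree-$\ell$ Legendre polynomial at $0$) on the space $\mc{H}_\ell$ of degree-$\ell$ spherical harmonics. Expanding $f = \sum_{\ell \geq 0} f_\ell$ — legitimate once $f$ is continuous — the functional equation gives $(1 + 2\lambda_\ell) f_\ell = 0$ for every $\ell \geq 1$. Since $P_\ell(0) = 0$ for odd $\ell$ (consistent with $f(v) = f(-v)$) and $P_\ell(0) = -\tfrac{1}{2}$ only for $\ell = 2$, this forces $f = f_0 + f_2$. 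But $f_0$ is a constant and $f_2$ is the restriction to $S^2$ of a traceless real-quadratic form, so $f(v) = \langle v|\rho|v\rangle$ for a symmetric $\rho$ with $\mathrm{tr}(\rho) = 3 f_0 = W$, as required.

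I expect the main obstacle to be step (a): once continuity is in hand the remainder is linear algebra and representation theory, but deducing continuity from an a priori merely set-theoretic frame-function condition is exactly the nontrivial content of Gleason's theorem. One could instead invoke the elementary proof of Cooke--Keane--Moran, or — in settings where $\mu$ is known to be measurable, which covers the probability distributions relevant to this paper — bypass (a) altogether, since $0 \leq f \leq W$ already places $f$ in $L^2(S^2)$ and the argument in (b) then goes through verbatim.
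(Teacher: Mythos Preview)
The paper does not prove this theorem; it simply records it (with citation to \cite{Gleason1975}) and remarks that the measure formulation follows from the frame-function formulation (stated immediately after as Thm.~\ref{thm: Gleason theorem for frame functions}, also cited without proof). Your first paragraph carries out exactly that reduction, so at that point you have already matched the paper.

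Everything from your second paragraph onward goes beyond what the paper does: you are sketching an actual proof of the frame-function theorem, which the paper treats as a black box. The sketch is sound. The reduction to real $3$-space and the gluing over $3$-dimensional subspaces is standard; the identification of step~(a) (continuity) as the genuine difficulty is accurate; and the spherical-harmonics computation in step~(b) is correct --- the great-circle averaging operator does act on $\mathcal{H}_\ell$ as multiplication by $P_\ell(0)$, and $P_\ell(0) = -\tfrac{1}{2}$ only at $\ell = 2$. This step~(b) is a cleaner, representation-theoretic repackaging of Gleason's original closure argument; what your treatment buys over a bare citation is an explanation of \emph{why} quadratic forms are singled out, at the cost of having to either import or reprove the delicate continuity step~(a).
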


A closely related concept is that of \emph{frame functions of weight $W \in \mathbb{R}$} on the unit sphere $S(\mc{H})$, $f: S(\mc{H}) \rightarrow \mathbb{R}$, where $\sum_{j=1}^d f(v_j) = W$ for all orthonormal bases $(v_j)_{j=1}^d \in \mathrm{ONB}(\mc{H})$ with $d := \mathrm{dim}(\mc{H})$. In fact, Thm.~\ref{thm: Gleason theorem} is a consequence of the following theorem about frame functions.

\begin{theorem}\label{thm: Gleason theorem for frame functions}
    \textbf{\emph{(Gleason for frame functions \cite{Gleason1975})}} Let $\mathrm{dim}(\mc{H}) \geq 3$ be finite. If $f$ is a non-negative frame function of weight $W \in \mathbb{R}^+$, then there exists a density matrix $\rho: \mc{H} \rightarrow \mc{H}$ such that $f(v) = W \mathrm{tr}(\rho p_v) = W \langle v| \rho |v \rangle$ for all $v \in S(\mc{H})$, $p_v = |v\rangle\langle v| \in \mc{P}(\mc{H})$.
\end{theorem}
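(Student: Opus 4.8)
The plan is to follow Gleason's original route: reduce to a real Hilbert space of dimension exactly three, solve that case by spherical harmonics, and reassemble. Two elementary reductions come first. (i) A non-negative frame function of weight $W$ is automatically bounded: if $v$ belongs to an ONB $(v_j)_{j=1}^d$ then $0\le f(v)\le\sum_j f(v_j)=W$. (ii) Restriction to subspaces is well behaved: for $\mc{U}\subseteq\mc{H}$, completing an arbitrary ONB of $\mc{U}$ to an ONB of $\mc{H}$ by vectors from $\mc{U}^\perp$, the weight condition forces $\sum_{\mc{U}^\perp}f$ to be a constant, hence $\sum_{\mc{U}}f$ is constant and $f|_{S(\mc{U})}$ is again a non-negative frame function (of smaller weight). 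Moreover, replacing a basis vector $v$ by $-v$ (resp.\ by $e^{i\theta}v$ in the complex case) keeps an ONB an ONB, so $f$ is even, $f(-v)=f(v)$, and, in the complex case, phase-invariant. By (ii) it suffices to prove the representation $f(v)=\langle v|S|v\rangle$ on every three-dimensional real subspace; a standard patching argument then produces a single symmetric operator $S$ on $\mc{H}$, since a quadratic form on a plane is determined by its values on the unit circle (so the forms attached to different $3$-subspaces agree on planar overlaps) and any two $3$-subspaces are joined by a chain sharing planes. The complex case is deduced from the real one by the analogous restriction-and-patching argument applied to the real linear spans of complex orthonormal bases --- on which real and complex orthonormality coincide, so that $f$ restricts there to a genuine real frame function --- with phase-invariance ensuring that the reassembled real form is the real part of a Hermitian form.

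The real three-dimensional case is where the work lies, with $S(\mc{H})=S^2$. The only genuinely hard step --- and the expected main obstacle --- is Gleason's regularity lemma: every bounded frame function on $S^2$ is continuous. Its proof controls the oscillation of $f$ near a point by producing, for nearby unit vectors, orthonormal frames that either share a vector or differ by a small rotation, so that the weight condition rewrites a difference of values of $f$ as a bounded combination of others, and iterating bootstraps an a priori oscillation bound to genuine continuity. Granting continuity, expand $f$ in spherical harmonics, $f=\sum_{\ell\ge0}f_\ell$; evenness kills the odd terms, and subtracting the constant $\tfrac{W}{3}$ (of frame weight $W$, since the constant $1$ has weight $3$ on $S^2$) leaves a continuous weight-$0$ frame function $h=\sum_{\ell\text{ even},\,\ell\ge2}h_\ell$. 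The continuous weight-$0$ frame functions form a closed, $\mathrm{SO}(3)$-invariant subspace of $C(S^2)$, so each isotypic component $h_\ell$ is again a weight-$0$ frame function lying in the irreducible degree-$\ell$ representation. Averaging the identity $h(v)+h(w)+h(w')=0$ over the orthonormal pairs $(w,w')$ in $v^\perp$ defines an $\mathrm{SO}(3)$-equivariant operator $M$ with $(I+M)h=0$; by Schur's lemma $M$ acts on the degree-$\ell$ space as a scalar, which one computes (apply $M$ to $P_\ell(\langle\cdot,v\rangle)$ and evaluate at $v$, using $w,w'\perp v$) to equal $2P_\ell(0)$. Hence $(1+2P_\ell(0))\,h_\ell=0$; since $1+2P_\ell(0)$ vanishes among even degrees only at $\ell=2$ (indeed $P_2(0)=-\tfrac12$, while $P_0(0)=1$ and $|2P_\ell(0)|\le\tfrac34$ for even $\ell\ge4$), we conclude $h=h_2$.

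Thus $h$ is a traceless quadratic form, $h(v)=\langle v|Q|v\rangle$ with $\mathrm{tr}\,Q=0$, whence $f(v)=\langle v|(\tfrac{W}{3}I+Q)|v\rangle=:\langle v|S|v\rangle$ with $\mathrm{tr}\,S=W$. Non-negativity of $f$ forces every eigenvalue of $S$ to be $\ge0$ --- evaluate $f$ on a normalized eigenvector --- so $\rho:=S/W$ is a density matrix and $f(v)=W\,\mathrm{tr}(\rho p_v)=W\langle v|\rho|v\rangle$ for all $v\in S(\mc{H})$; patching over $3$-subspaces (the glued $S$ again satisfies $\mathrm{tr}\,S=\sum_j f(e_j)=W$ for any ONB $(e_j)$) and then complexifying carries this to the stated generality, and $\rho$ is unique because $\langle v|\rho|v\rangle$ over all unit $v$ determines $\rho$ by polarization. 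I expect the continuity lemma to absorb essentially all of the difficulty, the harmonic-analysis step being routine once it is available; the hypothesis $\mathrm{dim}(\mc{H})\ge3$ is genuinely needed, as in dimension two the condition collapses to $f(v)+f(v^\perp)=W$, which has non-negative, non-quadratic solutions such as $\tfrac{W}{2}+\varepsilon\cos 6\vartheta$.
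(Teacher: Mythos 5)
The paper does not prove this statement---it is quoted directly from Gleason's 1957 paper---and your outline is a faithful reconstruction of Gleason's original argument (reduction to real three-dimensional subspaces, the boundedness-implies-continuity regularity lemma, the spherical-harmonics/Schur computation isolating the $\ell=2$ component, and the patching and complexification steps), with the numerical details ($1+2P_\ell(0)$ vanishing only at $\ell=2$ among even degrees, the dimension-two counterexample) checking out. The one step you leave as a black box, the continuity of bounded non-negative frame functions on $S^2$, is indeed where essentially all of the difficulty resides, exactly as you say.
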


Of course, we can apply Thm.~\ref{thm: Gleason theorem} to composite quantum systems and consider frame functions $f: S(\mc{H}) \rightarrow \mathbb{R}$, where $\mc{H} = \otimes_{i=1}^n \mc{H}_i$ is the tensor product Hilbert space. However, in doing so we no longer restrict to outcomes of local measurements only. Consequently, \cite{Wallach2002} restricts to unentangled frame functions $f: \sigma(\mc{H}) \rightarrow \mathbb{R}$ with constraints arising from bases consisting of product states only, i.e., $\sum_{j=1}^d f(v_j) = W$, $d := \mathrm{dim}(\mc{H})$ with $(v_j)_{j=1}^d \in \mathrm{ONB}(\sigma(\mc{H}))$, $\sigma(\mc{H}) := \{v_1 \otimes \cdots \otimes v_n \in S(\mc{H}) \mid v_i \in S(\mc{H}_i)\}$. Under this restriction, \cite{Wallach2002} derives the following result.



\begin{theorem}\label{thm: Wallach}
    \textbf{\emph{(Wallach \cite{Wallach2002})}} Let $\mc{H} = \otimes_{i=1}^n \mc{H}_i$ with $\mathrm{dim}(\mc{H}_i) \geq 3$ finite for all $i \in \{1,\cdots,n\}$, $n \in \mathbb{N}$. For every non-negative, unentangled frame function $f: \sigma(\mc{H}) \rightarrow \mathbb{R}$
    there exists a self-adjoint operator $t: \mc{H} \rightarrow \mc{H}$ such that $f(v) = \mathrm{tr}(t p_v) = \langle v | t | v \rangle$ for all $v \in \sigma(\mc{H})$, $p_v = |v\rangle\langle v| \in \mc{P}(\mc{H})$.
\end{theorem}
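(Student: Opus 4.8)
The plan is to prove the statement by induction on the number $n$ of tensor factors, peeling off one factor at a time and invoking Gleason's theorem for frame functions (Thm.~\ref{thm: Gleason theorem for frame functions}) on the factor that is removed. For $n=1$ one has $\sigma(\mc{H}_1) = S(\mc{H}_1)$, so $f$ is simply a non-negative frame function of weight $W := \sum_j f(v_j) \geq 0$, and Thm.~\ref{thm: Gleason theorem for frame functions} gives a density matrix $\rho$ with $f(v) = W\,\mathrm{tr}(\rho\, p_v)$; then $t := W\rho$ is the required self-adjoint operator. For the inductive step, split $\mc{H} = \mc{H}_1 \otimes \mc{H}'$ with $\mc{H}' := \bigotimes_{i=2}^{n}\mc{H}_i$; note $\dim(\mc{H}') \geq 3$, so the induction hypothesis applies to $\mc{H}'$.

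The technical core is a \emph{slicing lemma}: (i) for each fixed $v \in S(\mc{H}_1)$ the map $w' \mapsto f(v \otimes w')$ is a non-negative unentangled frame function on $\mc{H}'$; and (ii) for each fixed $w' \in \sigma(\mc{H}')$ the map $v \mapsto f(v \otimes w')$ is a non-negative frame function on $\mc{H}_1$. Both follow from the same basis-gluing argument: to compare the values of $f$ over two product bases of the variable factor that agree outside the relevant slice, one extends the frozen vector to a full orthonormal basis of its own factor, fixes an auxiliary product basis of the complementary factor, and assembles the pieces into two genuine elements of $\mathrm{ONB}(\sigma(\mc{H}))$ (a dimension count confirms this). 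Since the frame-function identity for $f$ forces both assembled sums to equal $W$, and all terms outside the slice are common to both, the two slice-sums coincide; non-negativity is inherited trivially. Applying (ii) with Thm.~\ref{thm: Gleason theorem for frame functions} — this is where $\dim(\mc{H}_1)\geq 3$ enters — yields for every $w' \in \sigma(\mc{H}')$ a positive operator $T_{w'}$ on $\mc{H}_1$ with $f(v \otimes w') = \langle v | T_{w'} | v\rangle$ for all $v \in S(\mc{H}_1)$; applying (i) with the induction hypothesis yields for every $v \in S(\mc{H}_1)$ a self-adjoint operator $t_v$ on $\mc{H}'$ with $f(v \otimes w') = \langle w' | t_v | w'\rangle$ for all $w' \in \sigma(\mc{H}')$.

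It remains to reassemble these into a single self-adjoint operator on $\mc{H} = \mc{H}_1 \otimes \mc{H}'$. Fix an orthonormal basis $(e_a)_{a=1}^{d_1}$ of $\mc{H}_1$. By the polarization identity, each matrix entry $\langle e_a | T_{w'} | e_b\rangle$ is a fixed complex-linear combination of numbers of the form $\langle u | T_{w'} | u\rangle = f(u \otimes w') = \langle w' | t_u | w'\rangle$ over finitely many unit vectors $u \in S(\mc{H}_1)$; hence $\langle e_a | T_{w'} | e_b\rangle = \langle w' | S_{ab} | w'\rangle$ for a fixed operator $S_{ab}$ on $\mc{H}'$ independent of $w'$. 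Define $t$ on $\mc{H}_1 \otimes \mc{H}'$ by $\langle e_a \otimes \xi\,|\,t\,|\,e_b \otimes \eta\rangle := \langle \xi | S_{ab} | \eta\rangle$. Expanding $v = \sum_a v_a e_a$ gives, for any product vector $v \otimes w' \in \sigma(\mc{H})$,
\[
  \langle v \otimes w'\,|\,t\,|\,v \otimes w'\rangle \;=\; \sum_{a,b} \overline{v_a}\, v_b\, \langle w' | S_{ab} | w'\rangle \;=\; \langle v | T_{w'} | v\rangle \;=\; f(v \otimes w')\,.
\]
Since $f$ is real-valued, replacing $t$ by its Hermitian part $\tfrac12(t + t^*)$ leaves all diagonal matrix elements on $\sigma(\mc{H})$ unchanged, so we may take $t$ self-adjoint; this closes the induction.

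I expect the slicing lemma to be the only genuine obstacle: the dimension count and the bookkeeping of which terms cancel must be carried out carefully in the multipartite setting, where an element of $\mathrm{ONB}(\sigma(\mc{H}))$ must factorize over \emph{all} $n$ tensor factors simultaneously, not merely over the bipartition $\mc{H}_1 \otimes \mc{H}'$. Everything after that is polarization and linearity. It is also worth flagging that the hypothesis $\dim(\mc{H}_i) \geq 3$ is used exactly once per factor, namely whenever Gleason's theorem is applied to that factor (to $\mc{H}_1$ in the inductive step and to $\mc{H}_n$ in the base case of the induction).
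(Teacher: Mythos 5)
Your proposal is correct and follows essentially the same route as the paper, which cites this result from \cite{Wallach2002} and reproduces exactly this inductive slicing strategy (peel off one factor, show both slices are frame functions via basis-gluing, apply Gleason's theorem per factor, reassemble by polarization) in its proof of the twisted-product-basis analogue, Prop.~\ref{prop: Wallach for twisted product bases}. You even spell out the reassembly step that the paper defers to \cite{Wallach2002} with ``the remainder of the proof proceeds as for Thm.~\ref{thm: Wallach}''.
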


As it turns out, Thm.~\ref{thm: Wallach} fails for frame functions $f: \sigma(\mc{H}) \rightarrow \mathbb{R}$ with constraints $\sum_{j_1,\cdots,j_n=1}^{d_1,\cdots,d_n} f(v_{j_1,1} \otimes \cdots \otimes v_{j_n,n}) = W$, $d_i := \mathrm{dim}(\mc{H}_i)$ further restricted to product bases $\beta(\mc{H}) := \{(v_{j_1,1} \otimes \cdots \otimes v_{j_n,n})_{j_1,\cdots,j_n=1}^{d_1,\cdots,d_n} \mid (v_{j_i,i})_{j_i=1}^{d_i} \in \mathrm{ONB}(\mc{H}_i)\}$ as shown in Prop.~5 in \cite{Wallach2002}. Nevertheless, a similar result \emph{does} hold for frame functions over twisted product bases $T(\beta(\mc{H})):= \{B' \in \mathrm{ONB}(H) \mid \exists B \in \beta(\mc{H}):\ B' \sim B \}$, where $\sim$ denotes the equivalence relation on unentangled bases under local unitary transformations.

\begin{proposition}\label{prop: Wallach for twisted product bases}
    Let $\mc{H} = \otimes_{i=1}^n \mc{H}_i$, $\mathrm{dim}(\mc{H}_i) \geq 3$ finite for all $i \in \{1,\cdots,n\}$, $n \in \mathbb{N}$. For every non-negative frame function over twisted product bases $f: \sigma(\mc{H}) \rightarrow \mathbb{R}$
    there exists a self-adjoint operator $t: \mc{H} \rightarrow \mc{H}$ such that $f(v) = \mathrm{tr}(t p_v) = \langle v |t| v \rangle$ for all $v \in \sigma(\mc{H})$, $p_v = |v\rangle\langle v| \in \mc{P}(\mc{H})$.
\end{proposition}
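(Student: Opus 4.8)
The plan is to reduce Proposition~\ref{prop: Wallach for twisted product bases} to Wallach's theorem, Thm.~\ref{thm: Wallach}, by showing that a non-negative frame function over twisted product bases is in fact a non-negative unentangled frame function. Concretely, I would argue that the constraints imposed by twisted product bases \emph{imply} the constraints imposed by arbitrary unentangled (product-state) bases, so that the domain $\sigma(\mc{H})$ stays the same but the constraint set is the same as in Thm.~\ref{thm: Wallach}. Since Thm.~\ref{thm: Wallach} then produces the self-adjoint operator $t$ with $f(v)=\langle v|t|v\rangle$, the proposition follows immediately. (Note that the excerpt already asserts twisted product bases carry \emph{strictly fewer} conditions than unentangled frame functions, so the two constraint sets are literally different; what has to be shown is that the twisted-product constraints still force the value sums over \emph{all} unentangled ONBs to agree with the common weight $W$.)

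The key step is therefore: given any unentangled orthonormal basis $B_0 \in \mathrm{ONB}(\sigma(\mc{H}))$, show $\sum_{v \in B_0} f(v) = W$ using only the hypothesis that the sum equals $W$ on every basis in $T(\beta(\mc{H}))$. First I would observe that, by definition of $\sim$, every twisted product basis is obtained from a product basis by a finite sequence of local $2\times 2$ rotations, each acting on a pair $x_{l_r,r}\otimes v_{j_i,i},\, x_{l_r,r}\otimes v_{j'_i,i}$ in the current basis. The crucial sub-lemma is a \emph{locality of the frame-sum} statement: along any such elementary rotation, the quantity $\sum_{v} f(v)$ is unchanged, because the rotation only mixes a two-dimensional local subspace and, by the frame-function property applied in that $2$-dimensional factor (more precisely, by the defining identity of weight on the affected pair, which holds since $f$ restricted to any such $2$-plane tensored with fixed orthonormal vectors on the other factors behaves like a $2$-dimensional frame function of some fixed weight), the two summands $f(x\otimes v_{j_i})+f(x\otimes v_{j'_i})$ depend only on the $2$-plane they span. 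Hence the total sum is invariant under twisting, so it takes the value $W$ not just on product bases but on \emph{all} of $T(\beta(\mc{H}))$ — which is the given hypothesis. Then I would invoke the (presumably already established, or straightforward) fact that any unentangled ONB can itself be reached from a product basis by local unitaries, i.e. $\mathrm{ONB}(\sigma(\mc{H})) \subseteq T(\beta(\mc{H}))$ up to the equivalence, so that the frame-sum equals $W$ on every unentangled basis. This makes $f$ a genuine non-negative unentangled frame function, and Thm.~\ref{thm: Wallach} applies.

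The main obstacle I anticipate is the sub-lemma that an elementary local twist leaves the frame-sum invariant, in particular making precise the claim that $f$ restricted to a $2$-plane-times-fixed-orthonormal-vectors slice is a $2$-dimensional frame function of a \emph{well-defined} weight. The subtlety is that a priori $f$ is only constrained on product bases, and a local $2\times 2$ rotation takes a product basis \emph{out of} $\beta(\mc{H})$, so one cannot directly apply the product-basis constraint before and after the rotation. The resolution is to work inductively along the rotation sequence: at each stage the current basis $B^m$ lies in $T(\beta(\mc{H}))$ by construction, so the hypothesis gives $\sum_{v\in B^m} f(v) = W$ directly for every $m$ — no invariance argument is even needed if one is careful, since every intermediate basis is already a twisted product basis. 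Thus the real content reduces to: (a) $\mathrm{ONB}(\sigma(\mc{H})) \subseteq T(\beta(\mc{H}))$, i.e. every unentangled basis is twist-equivalent to a product basis (this is where the dimension hypothesis $\dim\mc{H}_i\ge 3$ and a Rubik's-cube-style argument enter, and it is plausibly the technically heaviest part), and (b) a direct citation of Thm.~\ref{thm: Wallach}. I would therefore structure the proof as: establish (a) as a lemma about bases (independent of $f$), deduce that the twisted-product frame-function hypothesis implies the unentangled frame-function hypothesis, and conclude by Thm.~\ref{thm: Wallach}.
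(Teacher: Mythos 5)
Your reduction to Thm.~\ref{thm: Wallach} hinges entirely on step (a), the claim that $\mathrm{ONB}(\sigma(\mc{H})) \subseteq T(\beta(\mc{H}))$, i.e.\ that every unentangled orthonormal basis is twist-equivalent to a product basis. This is precisely the statement that Prop.~\ref{prop: strict inclusion} refutes: the inclusion runs the other way, $T(\beta(\mc{H})) \subsetneq \mathrm{ONB}(\sigma(\mc{H}))$, and the strictness is witnessed by an explicit unentangled basis of $(\mathbb{C}^2)^{\otimes 10}$ built from the Lagarias--Shor counterexample to Keller's cube-tiling conjecture --- an unentangled basis in which any two vectors differ in at least two tensor factors, so that no elementary local twist is available at all. (That counterexample lives in local dimension $2$ while the proposition assumes $\dim\mc{H}_i\geq 3$, but you offer no argument that the inclusion could be restored under that hypothesis; you flag it as ``plausibly the technically heaviest part,'' which is to say the entire content of the proof is deferred to a claim the paper is at pains to deny.) There is also a structural reason to distrust the plan: if the twisted-product constraints subsumed all unentangled-basis constraints at the level of bases, Prop.~\ref{prop: Wallach for twisted product bases} would be a trivial corollary of Wallach's theorem rather than a strict generalisation of it, and the paper's stated point --- that twisted product bases impose \emph{strictly fewer} conditions yet still suffice --- would evaporate.

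The actual proof does not pass through general unentangled bases at all. It re-runs Wallach's induction on the number of tensor factors directly with twisted product bases. The base case is Gleason's theorem for frame functions (Thm.~\ref{thm: Gleason theorem for frame functions}). The inductive step rests on a closure property you did not identify: if $(v_j)_j\in\mathrm{ONB}(\mc{H}_1)$ and, \emph{for each $j$ separately}, $(u^j_k)_k\in T(\beta(V))$ is a twisted product basis of the complementary factor $V$, then $(v_j\otimes u^j_k)_{j,k}$ is again a twisted product basis of $\mc{H}=\mc{H}_1\otimes V$ (and symmetrically with the roles of the factors exchanged). This is what makes each slice $f_v(\,\cdot\,)=f(v\otimes\,\cdot\,)$ a non-negative twisted-product frame function on $V$ of a well-defined weight, so the inductive hypothesis produces the operators $t_V(v)$ and $t_{\mc{H}_1}(u)$, and the remainder of Wallach's argument assembles them into a single self-adjoint $t$. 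If you want to salvage your write-up, replace step (a) by this closure lemma and follow Wallach's induction; the fact that the frame-sum is constant on all of $\mathrm{ONB}(\sigma(\mc{H}))$ is then a \emph{consequence} of the conclusion (since $\sum_v f(v)=\mathrm{tr}(t)$ for any orthonormal basis), not an available hypothesis.
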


\begin{proof}
    In the proof of Thm.~2 in \cite{Wallach2002} replace unentangled bases with twisted product bases in the inductive hypothesis. The case $n=1$ still holds by Thm.~\ref{thm: Gleason theorem for frame functions}. Consider $\mc{H} = \mc{H}_1 \otimes V$, $V = \otimes_{i=2}^n \mc{H}_i$ with $\mathrm{dim}(\mc{H}_i) \geq 3$ for all $1 \leq i \leq n$. If $(v_j)_{j=1}^{d_1} \in \mathrm{ONB}(\mc{H}_1)$ is an orthonormal basis of $\mc{H}_1$ and $(u^j_{k})_{k=1}^{d_V} \in T(\beta(V))$ is a twisted product basis for $V$, then $(v_j \otimes u^j_k)_{j,k=1}^{d_1,d_V} \in T(\beta(\mc{H}))$ is a twisted product basis for $\mc{H}$. This follows since we can transform $u^j_k$ for every $j$ into a product basis on $V$ by the assumption that $u^j_k \in T(\beta(V))$, and the fact that applying local unitaries on subspaces $\sum_{j_i=1}^{d_i} v_{j_r,r} \otimes v_{j_i,i}$ for all $i$, we can transform between product bases in $\beta(V)$.
    
    Since $f$ is a twisted product frame function (of weight $W \in \mathbb{R}^+$), the function $f_v(u) = f(v \otimes u)$ is a non-negative twisted product frame function on $V$ (of weight $W_v = W_{v_1} = W - \sum_{j=2,k=1}^{d_1,d_V} f(v_j \otimes u^j_k) \in \mathbb{R}^+$) for each $v \in \mc{H}_1$. By the inductive hypothesis we thus find $f_v(u) = \langle u| t_V(v)|u\rangle$ for all $u \in \sigma(V)$ with $t_V(v): V \rightarrow V$ self-adjoint.
    
    Conversely, let $(u_k)_{k=1}^{d_V} \in T(\beta(V))$ be a twisted product basis for $V$ and $(v^k_j)_{j=1}^{d_1} \in \mathrm{ONB}(\mc{H}_1)$ for every $k$, then $(v^k_j \otimes u_k)_{j,k=1}^{d_1,d_V} \in T(\beta(\mc{H}))$ is a twisted product basis for $\mc{H}$ (by a similar argument as before), and by the inductive hypothesis we conclude $f_u(v) := f(v \otimes u) = \langle v|t_{\mc{H}_1}(u)|v \rangle$ for all $v \in S(\mc{H}_1)$ with $t_{\mc{H}_1}(u): \mc{H}_1 \rightarrow \mc{H}_1$ self-adjoint. The remainder of the proof proceeds as for Thm.~\ref{thm: Wallach} in \cite{Wallach2002}.
\end{proof}

One might expect (non-negative) unentangled frame functions to correspond with (non-negative) twisted frame functions. However, as a consequence of the failure of Keller's cube-tiling conjecture, which was proven in \cite{LagariasShor1996}, this turns out not to be the case.

\begin{proposition}\label{prop: strict inclusion}
     $T(\beta(\mc{H})) \subsetneq \mathrm{ONB}(\sigma(\mc{H}))$
\end{proposition}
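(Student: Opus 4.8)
The claim is that twisted product bases form a strict subset of unentangled orthonormal bases. To show this one must exhibit an unentangled ONB that cannot be obtained from any product basis by the sequence of local ``twisting'' unitaries defining the relation $\sim$. The connection to Keller's cube-tiling conjecture is the essential clue: the relevant obstruction is combinatorial/geometric rather than operator-theoretic.

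My plan is the following. First I would recall the structure of the twisting moves: each move acts on a two-dimensional local subspace spanned by two basis vectors that agree in all but one tensor factor, rotating them into a new pair. Tracking how such moves act on the ``pattern'' of which local vectors appear with which partners, one sees that being a twisted product basis is equivalent to a kind of iterated local reducibility, and I would argue that the reachable bases all share a combinatorial decomposition property. Second, I would invoke the failure of Keller's conjecture (proven in \cite{LagariasShor1996} in sufficiently high dimension): there exist tilings of $\mathbb{R}^d$ by unit cubes in which no two cubes share a full facet. Translating a suitable such tiling into an unentangled ONB (the standard dictionary sending faceted cube-tilings to orthonormal product-vector decompositions, in the spirit of the Wootters--et al.\ construction in Fig.~\ref{fig: transformations between unentangled and product bases} but in higher dimension and without any faceted pair), one obtains a genuine unentangled basis $B_0 \in \mathrm{ONB}(\sigma(\mc{H}))$. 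Third, I would show $B_0 \notin T(\beta(\mc{H}))$: since every available twisting unitary acts on a subspace $x_{l_r,r} \otimes (v_{j_i,i} + v_{j'_i,i})$ spanned by two basis elements differing in exactly one factor, applying it requires such a ``faceted'' pair to be present in the current basis; the no-facet property of the tiling underlying $B_0$ means no twisting move can be applied to $B_0$ at all, so the only basis $\sim$-equivalent to $B_0$ is $B_0$ itself, and $B_0$ is manifestly not a product basis. Hence $B_0$ witnesses the strict inclusion, while the inclusion $T(\beta(\mc{H})) \subseteq \mathrm{ONB}(\sigma(\mc{H}))$ is immediate since product bases are unentangled and local unitaries preserve unentangledness.

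The main obstacle is the second and third steps taken together: making the dictionary between cube-tilings and unentangled bases precise enough that (a) a non-faceted Keller tiling really does yield a valid unentangled ONB in a finite tensor product $\otimes_i \mathbb{C}^{d_i}$ with each $d_i \geq 3$, and (b) the no-facet condition of the tiling transfers exactly to the statement that no admissible twisting subspace exists in the resulting basis. Care is needed because the twisting relation allows more general local rotations than the $\pm$-superpositions appearing in the figure, so one must check that even after allowing arbitrary $2\times 2$ local unitaries the obstruction persists --- i.e.\ that the existence of \emph{any} pair of basis vectors differing in a single factor is what is genuinely blocked, and that this single-factor-difference condition is invariant enough to be read off from the tiling combinatorics. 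Once this correspondence is pinned down, the argument closes quickly.

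I would therefore structure the write-up as: (i) observe the easy inclusion; (ii) recall the tiling/ONB correspondence and the statement of Keller's conjecture and its refutation; (iii) pick a non-faceted tiling in a dimension compatible with the hypotheses, build $B_0$, and note it is unentangled and non-product; (iv) prove no twisting move applies to $B_0$, so its $\sim$-class is a singleton and $B_0 \notin T(\beta(\mc{H}))$. I expect step (iv) to be the crux, and I would be prepared to spend most of the proof justifying that ``twistable'' is equivalent to ``contains a single-factor-difference pair'' and that this is exactly the faceting condition ruled out by \cite{LagariasShor1996}.
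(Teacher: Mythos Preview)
Your plan is essentially the paper's proof: invoke the Lagarias--Shor counterexample to Keller's conjecture, translate the facet-free cube tiling into an unentangled basis in which no two vectors differ in a single tensor factor, and observe that every twisted product basis must contain such a pair (the last twisting move produces one). The paper makes the dictionary fully explicit: for $n=10$ it encodes each coordinate $m_i \in \{0,1,2,3\}$ of a cube center as a qubit state via $0 \mapsto |0\rangle$, $1 \mapsto |+\rangle$, $2 \mapsto |1\rangle$, $3 \mapsto |-\rangle$, so that condition (i) of the tiling ($|m_i - m_i'| = 2$ for some $i$) gives orthogonality and condition (ii) (difference in at least two coordinates) is exactly the absence of a single-factor-difference pair.

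One point where you are making life harder than necessary: you flag as the ``main obstacle'' producing the example with each $d_i \geq 3$. Proposition~\ref{prop: strict inclusion} carries no dimension hypothesis, and the paper simply works in $(\mathbb{C}^2)^{\otimes 10}$. The restriction $\dim(\mc{H}_i) \geq 3$ appears only in the Gleason-type statements (Thm.~\ref{thm: non-signalling frame functions correspond to self-adjoint operators PVM}, Prop.~\ref{prop: Wallach for twisted product bases}), not here; dropping that self-imposed constraint collapses your obstacle entirely and the qubit encoding above goes through without modification. Your step (iv) is also fine as stated: since any applicable twist acts on a subspace $x_{l_r,r} \otimes (v_{j_i,i} + v_{j'_i,i})$ and leaves two basis vectors in that subspace (hence differing in one factor) both before and after, a basis with no such pair is isolated under $\sim$.
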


\begin{proof}
    Clearly, every twisted product basis is also an unentangled basis. The fact that the other direction fails is non-trivial, but can be concluded from a counterexample to Keller's tiling conjecture \cite{LagariasShor1996}: for $n \geq 10$ construct the following tiling of $\mathbb{R}^n$ by cubes of length $2$ such that
    \begin{itemize}
        \item [(a)] the centers of all cubes are in $\mathbb{Z}^n$,
        \item [(b)] the tiling is $4\mathbb{Z}^n$-periodic,
        \item [(c)] no two cubes have a complete facet in common.
    \end{itemize}
    More precisely, let $C := \{(x_1,\cdots,x_n) \mid -1 \leq x_i \leq 1 \quad \forall i \in \{1,\cdots,n\}\}$ denote a cube (of length $2$). Then a tiling corresponds to $2^n$ equivalence classes of translates of $C$ of the form $\mathbf{m} + C + 4\mathbb{Z}^n$ for
    \begin{equation}\label{eq: cube tiling locations}
        \mathbf{m} = (m_1,\cdots,m_n) \in \mathbb{Z}^n, \quad 0 \leq m_i \leq 3\; .
    \end{equation}
    Next, consider the conditions: (i) $\mathbf{m}$ and $\mathbf{m'}$ have some $|m_i - m_i'| = 2$ and (ii) $\mathbf{m}$ and $\mathbf{m'}$ differ in two coordinate directions. Finally, denote by $G_n$ and $G_n^*$ two graphs, each of which has $4^n$ vertices labeled by the $4^n$ vectors in Eq.~(\ref{eq: cube tiling locations}) and $G_n$ has an edge between vertices $\mathbf{m}$ and $\mathbf{m'}$ if (i) holds, while $G_n^*$ is defined to have an edge between vertices $\mathbf{m}$ and $\mathbf{m'}$ if (i) and (ii) hold.
    
    Then a set $\mc{S}$ of $2^n$ vectors of the form in Eq.~(\ref{eq: cube tiling locations}) yields a $4\mathbb{Z}^n$-periodic cube tiling if and only if $\mc{S}$ forms a clique in $G_n$; it yields a $4\mathbb{Z}^n$-periodic cube tiling with no two cubes having a complete facet in common if and only if $\mc{S}$ forms a clique in $G_n^*$.
    
    We now translate this into a basis of $\mc{H} = (\mathbb{C}^2)^{\otimes 10}$. Consider the qubit states $|0\rangle$, $|1\rangle$, $|+\rangle := \frac{1}{\sqrt{2}}(|0\rangle + |1\rangle)$ and $|-\rangle := \frac{1}{\sqrt{2}}(|0\rangle - |1\rangle)$ and define the correspondence $m_i \mapsto |\psi(m_i)\rangle$ as follows:
    \begin{align*}
        |\psi(m_i = 0)\rangle &= |0\rangle_i & |\psi(m_i = 1)\rangle &= |+\rangle_i \\
        |\psi(m_i = 2)\rangle &= |1\rangle_i & |\psi(m_i = 3)\rangle &= |-\rangle_i
    \end{align*}
    First, note that $|\psi(\mc{S})\rangle := \{|\psi(m_1)\rangle \otimes \cdots \otimes |\psi(m_n)\rangle \mid \mathbf{m} \in \mc{S}\}$ forms a basis of $(\mathbb{C}^2)^{\otimes 10}$: there are $2^{10}$ vectors and it is easily seen that $\langle \psi(\mathbf{m})|\psi(\mathbf{m'}) \rangle = 0$ for $\mathbf{m}, \mathbf{m'} \in \mc{S}$, $\mathbf{m} \neq \mathbf{m'}$ by condition (i) above. Moreover, $|\psi(\mc{S})\rangle \in \sigma((\mathbb{C}^2)^{\otimes 10})$ is an unentangled basis by construction. However, any two vectors $|\psi(\mathbf{m})\rangle$, $|\psi(\mathbf{m'})\rangle$ for $\mathbf{m}, \mathbf{m'} \in \mc{S}$, $\mathbf{m} \neq \mathbf{m'}$ differ on at least two sites by condition (ii). It follows that no two-dimensional subspace of the form $x_{j_r,r} \otimes (v_{j_i,i} + v_{j'_i,i})$ is spanned by vectors in $|\psi(\mc{S})\rangle$. Yet, any twisted product basis has at least one two-dimensional subspace of this form, hence, $|\psi(\mc{S})\rangle$ cannot be a twisted product basis.
\end{proof}

Finally, we relate non-negative frame functions over twisted product bases with non-negative, non-signalling frame functions over product bases. Recall that a frame function over product bases is called non-signalling if for $i \in \{1,\cdots,n\}$ with $(v_{j_i,i})_{j_i=1}^{d_i}, (w_{k_i,i})_{k_i=1}^{d_i} \in \mathrm{ONB}(\mc{H}_i)$ and $x_{l_r,r} \in \mc{S}(\mc{H}_r)$ for all $l_r \in \{1,\cdots,d_r\}$, $r \neq i$ the following condition is satisfied (cf. Eq.~(\ref{eq: no-signalling constraints})):
\begin{equation*}
    \sum_{j_i=1}^{d_i} f(x_{l_r,r} \otimes v_{j_i,i}) = \sum_{k_i=1}^{d_i} f(x_{l_r,r} \otimes w_{k_i,i})
\end{equation*}

\begin{lemma}\label{lm: product bases + ns}
    Let $\mc{H} = \otimes_{i=1}^n \mc{H}_i$, $\mathrm{dim}(\mc{H}_i) \geq 3$ finite for all $i \in \{1,\cdots,n\}$, $n \in \mathbb{N}$. There is a bijective correspondence between non-negative, non-signalling frame functions $f: \sigma(\mc{H}) \rightarrow \mathbb{R}$ over product bases and non-negative frame functions over twisted product bases.
\end{lemma}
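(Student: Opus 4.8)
The plan is to establish the bijection by showing that the two classes of constraints are equivalent for a non-negative function $f:\sigma(\mc{H})\to\mathbb{R}$, i.e.\ that $f$ satisfies the product-basis frame condition \emph{and} the no-signalling equations of Eq.~(\ref{eq: no-signalling constraints}) if and only if $f$ is constant-sum on every twisted product basis (and hence, in particular, on every product basis, so the two weights agree). Since a twisted product basis is obtained from a product basis by a finite sequence of ``twists''---local unitaries $U^m$ acting nontrivially only on a two-dimensional subspace of the form $x_{l_r,r}\otimes\mathrm{span}(v_{j_i,i},v_{j'_i,i})$ with both $x_{l_r,r}\otimes v_{j_i,i}$ and $x_{l_r,r}\otimes v_{j'_i,i}$ in the current basis---the whole content is to show that each individual twist preserves the sum $\sum_v f(v)$ precisely when no-signalling holds. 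So the core computation is a local, essentially two-dimensional statement.

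First I would reduce to the single-twist step. Suppose $B^m=U^mB^{m-1}$ where $U^m$ is supported on $W:=x_{l_r,r}\otimes\mathrm{span}(v_{j_i,i},v_{j'_i,i})$. All basis vectors of $B^{m-1}$ outside $W$ are fixed, so $\sum_{u\in B^m}f(u)-\sum_{u\in B^{m-1}}f(u)=\bigl[f(x_{l_r,r}\otimes w_{j_i,i})+f(x_{l_r,r}\otimes w_{j'_i,i})\bigr]-\bigl[f(x_{l_r,r}\otimes v_{j_i,i})+f(x_{l_r,r}\otimes v_{j'_i,i})\bigr]$, where $(w_{j_i,i},w_{j'_i,i})$ is the rotated orthonormal pair in $\mathrm{span}(v_{j_i,i},v_{j'_i,i})\subseteq\mc{H}_i$. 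This difference vanishes for \emph{all} such rotations exactly when $f(x_{l_r,r}\otimes\,\cdot\,)$, restricted to any two-dimensional subspace of $\mc{H}_i$ with a fixed orthogonal complement inside a basis, has a constant two-vector sum; summing such identities over a basis of the complement and combining two bases $(v_{j_i,i})$ and $(w_{k_i,i})$ of $\mc{H}_i$ yields precisely Eq.~(\ref{eq: no-signalling constraints}). Conversely, Eq.~(\ref{eq: no-signalling constraints}) together with the product-basis frame condition gives the constant two-vector sum on each such pair (this is the standard ``fixing all but one factor'' argument, using $\dim\mc{H}_i\geq 3$ only to have enough room, though dimension $2$ would already suffice here). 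Iterating over $m=1,\dots,N$ then shows $\sum_{u\in B'}f(u)=\sum_{u\in B}f(u)=W$ for every $B'\sim B$, so $f$ is a frame function over twisted product bases; and since product bases are themselves twisted product bases, the converse direction (twisted $\Rightarrow$ product $+$ no-signalling) is immediate, as no-signalling is just the special case of the twisted-basis condition applied to the twists within a single product context.

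I would then assemble the bijection: the map is the identity on functions---both descriptions are conditions on the same $f:\sigma(\mc{H})\to\mathbb{R}$---so I only need to check that the two constraint sets cut out the same subset of functions, which is exactly the equivalence just sketched, plus the observation that the weight $W$ is the same number in both pictures (evaluate on any product basis). The main obstacle, and the step I would write out most carefully, is the single-twist computation: verifying that invariance of the local two-vector sum under \emph{arbitrary} $SO(2)$ (or $SU(2)$) rotations of a coordinate pair in $\mc{H}_i$ is equivalent to the two-basis no-signalling identity, including checking that the twists available in the definition of $\sim$ are rich enough to realize all the rotations needed to connect any product basis to any equivalent one without ever leaving the set of bases on which the sum is controlled. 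I expect the bookkeeping of indices in the shorthand $x_{l_r,r}\otimes v_{j_i,i}$---keeping track of which tensor factors are frozen during each twist---to be the only genuinely fiddly part; the mathematics is elementary once the single-twist lemma is in place.
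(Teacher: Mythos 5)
Your proposal is correct and follows essentially the same route as the paper's proof: reduce to a single twist, observe that a twist only replaces a pair $x_{l_r,r}\otimes v_{j_i,i},\,x_{l_r,r}\otimes v_{j'_i,i}$ by another orthonormal pair of the same two-dimensional subspace, and show that invariance of the corresponding two-vector sum is equivalent to Eq.~(\ref{eq: no-signalling constraints}) by completing the pair to two full local bases agreeing elsewhere and subtracting the common terms. The paper's version is just a terser rendering of the same argument, so no further comparison is needed.
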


\begin{proof}
    Let $x_{l_r,r} \in \mc{S}(\mc{H}_r)$ for all $l_r \in \{1,\cdots,d_r\}$, $r \neq i$ and $(v_{j_i,i})_{j_i=1}^{d_i},(w_{k_i,i})_{j_i=1}^{d_i} \in \mathrm{ONB}(\mc{H}_i)$ such that w.l.o.g. $|v_{1,i}\rangle \langle v_{1,i}| + |v_{2,i}\rangle \langle v_{2,i}| = |w_{1,i}\rangle \langle w_{1,i}| + |w_{2,i}\rangle \langle w_{2,i}|$ and $v_{j_i,i} = w_{k_i,i}$ for $3 \leq j_i=k_i \leq n$. By no-signalling in Eq.~(\ref{eq: no-signalling constraints}),
    \begin{align*}
        f(x_{l_r,r} &\otimes v_{1,i}) + f(x_{l_r,r} \otimes v_{2,i}) \\
        =\ &\sum_{j_i=1}^{d_i} f(x_{l_r,r} \otimes v_{j_i,i}) - \sum_{j_i=3}^{d_i} f(x_{l_r,r} \otimes v_{j_i,i}) \\
        =\ &\sum_{k_i=1}^{d_i} f(x_{l_r,r} \otimes w_{k_i,i}) - \sum_{k_i=3}^{d_i} f(x_{l_r,r} \otimes w_{k_i,i}) \\
        =\ &f(x_{l_r,r} \otimes w_{1,i}) + f(x_{l_r,r} \otimes w_{2,i})\; .
    \end{align*}
    As twisted product bases are generated from local unitaries acting on two-dimensional subspaces of the form $x_{l_r,r} \otimes (v_{j_i,i} + v_{j'_i,i})$, $f$ is also a frame function over twisted product bases. Conversely, for the latter Eq.~(\ref{eq: no-signalling constraints}) holds since it holds already for two-dimensional subspaces.
\end{proof}

Clearly, a non-negative, non-signalling frame function of weight $1$ corresponds with a locally quantum, non-signalling probability distribution as defined in \cite{KlayRandallFoulis1987,Wehner2010}; Prop.~\ref{prop: Wallach for twisted product bases}, Prop.~\ref{prop: strict inclusion}, and Lm.~\ref{lm: product bases + ns} thus establish the precise relation between these different concepts, in particular, Prop.~\ref{prop: Wallach for twisted product bases} and Lm.~\ref{lm: product bases + ns} prove Thm.~\ref{thm: non-signalling frame functions correspond to self-adjoint operators PVM}.

\section{Systems of dimension two}\label{sec: Systems of dimensions two}

Note that Thm.~\ref{thm: non-signalling frame functions correspond to self-adjoint operators PVM} only applies to finite local dimensions $\mathrm{dim}(\mc{H}_i) \geq 3$. This restriction is due to Thm.~\ref{thm: Gleason theorem for frame functions}, since frame functions in two dimensions do not restrict to the latter.
Nevertheless, generalisations of Thm.~\ref{thm: Gleason theorem for frame functions} to two dimensions exist based on (subsets of) positive operator-valued measures (POVMs) \cite{Busch2003, CavesFuchs2004, WrightWeigert2018}.

More precisely, non-negative frame functions $f: \mc{E}(\mc{H}) \rightarrow \mathbb{R}^+$ of weight $W \in \mathbb{R}^+$ with domain $\mc{E}(\mc{H})$ the set of all \emph{effects}, i.e., convex combinations of projections, and such that $\sum_{i \in I} f(e_i) = W$ whenever
$\sum_{i \in I} e_i = 1$, correspond to density matrices: $f(e) = W \mathrm{tr}(\rho e)$ for all $e \in \mc{E}(\mc{H})$ and $\mathrm{dim}(\mc{H}) \geq 2$ finite.\\
\indent Similarly, replacing $\sigma(\mc{H})$ by $\sigma(\mc{E}(\mc{H}))$ (equivalently, projection-valued measures (PVMs) by POVMs) in the otherwise analogous definitions of (twisted) product frame functions and no-signalling in Eq.~(\ref{eq: no-signalling constraints}), one obtains a generalisation for systems with $\mathrm{dim}(\mc{H}_i) = 2$.

\begin{theorem}\label{thm: non-signalling frame functions correspond to self-adjoint operators (effects)}
    Let $\mc{H} = \bigotimes_{i=1}^n \mc{H}_i$, $\mathrm{dim}(\mc{H}_i) \geq 2$ finite for all $i \in \{1,\cdots,n\}$, $n \in \mathbb{N}$. For every non-negative, non-signalling frame function over product POVMs $f: \sigma(\mc{E}(\mc{H})) \rightarrow \mathbb{R}$
    there exists a self-adjoint operator $t: \mc{H} \rightarrow \mc{H}$ such that $f(e) = \mathrm{tr}(t e)$ for all $e \in \sigma(\mc{E}(\mc{H}))$.
\end{theorem}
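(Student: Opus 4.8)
The plan is to run exactly the same three-step argument that proves Thm.~\ref{thm: non-signalling frame functions correspond to self-adjoint operators PVM} --- the effect analogues of Prop.~\ref{prop: Wallach for twisted product bases} and Lm.~\ref{lm: product bases + ns} --- with the only structural change being that the base case of the induction is now supplied by the POVM strengthening of Gleason's theorem \cite{Busch2003,CavesFuchs2004,WrightWeigert2018}, valid already for $\mathrm{dim}(\mc{H}) \geq 2$. Throughout one replaces the unit sphere $S(\mc{H}_i)$ by the effect space $\mc{E}(\mc{H}_i)$, orthonormal bases by POVMs (resolutions of the identity into effects), and defines twisted product POVMs exactly as before: the POVMs obtained from product POVMs by successively applying local unitaries acting non-trivially only on two-outcome sub-families $x_{l_r,r}\otimes(e_{j_i,i}+e_{j'_i,i})$.

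First I would prove the effect version of Lm.~\ref{lm: product bases + ns}: a non-negative frame function $f\colon\sigma(\mc{E}(\mc{H}))\to\mathbb{R}$ over product POVMs is non-signalling in the sense of Eq.~(\ref{eq: no-signalling constraints}) (with effects in place of unit vectors) if and only if it is a frame function over twisted product POVMs. As in the proof of Lm.~\ref{lm: product bases + ns}, no-signalling forces the value of $f$ on any two-outcome local rearrangement $\{x_{l_r,r}\otimes e_{1,i},\,x_{l_r,r}\otimes e_{2,i}\}\mapsto\{x_{l_r,r}\otimes e'_{1,i},\,x_{l_r,r}\otimes e'_{2,i}\}$ with $e_{1,i}+e_{2,i}=e'_{1,i}+e'_{2,i}$ to be invariant, which is precisely compatibility with the generating local unitaries; the converse holds because Eq.~(\ref{eq: no-signalling constraints}) is a linear condition already entailed by invariance under these two-outcome moves.

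Next I would establish the effect version of Prop.~\ref{prop: Wallach for twisted product bases} by induction on $n$. The case $n=1$ is the POVM-Gleason theorem: every non-negative frame function of weight $W$ on $\mc{E}(\mc{H}_1)$, $\mathrm{dim}(\mc{H}_1)\geq2$, is of the form $e\mapsto W\,\mathrm{tr}(\rho\,e)$ \cite{Busch2003,CavesFuchs2004,WrightWeigert2018}. For the inductive step write $\mc{H}=\mc{H}_1\otimes V$, $V=\otimes_{i=2}^n\mc{H}_i$; as in Prop.~\ref{prop: Wallach for twisted product bases}, a product of a POVM on $\mc{H}_1$ with a twisted product POVM on $V$ is again a twisted product POVM on $\mc{H}$, so for each fixed effect $e\in\mc{E}(\mc{H}_1)$ the function $f_e(u)=f(e\otimes u)$ is a non-negative frame function over twisted product POVMs on $V$ of some weight $W_e$, and the induction hypothesis gives a self-adjoint $t_V(e)\colon V\to V$ with $f(e\otimes u)=\mathrm{tr}(t_V(e)\,u)$; symmetrically, for each fixed twisted product POVM $u$ on $V$, the base case gives $f(e\otimes u)=\mathrm{tr}(t_{\mc{H}_1}(u)\,e)$. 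Using that the effects linearly span $\mc{L}_\mathrm{sa}(\mc{H}_i)$, and that $f$ is additive and positively homogeneous in each tensor factor separately --- the latter obtained by comparing the frame-function constraint over a refined product POVM with the one over its coarse-graining, together with no-signalling to keep the weight well defined --- one checks that $e\mapsto t_V(e)$ extends to a linear map $\mc{L}_\mathrm{sa}(\mc{H}_1)\to\mc{L}_\mathrm{sa}(V)$; this determines a unique $t\in\mc{L}_\mathrm{sa}(\mc{H}_1)\otimes\mc{L}_\mathrm{sa}(V)=\mc{L}_\mathrm{sa}(\mc{H})$ with $f(e\otimes u)=\mathrm{tr}(t\,(e\otimes u))$, exactly as in the final part of the proof of Thm.~2 in \cite{Wallach2002}. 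Combining the two effect analogues then proves the claim.

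I expect the combination step to be the main obstacle: one must upgrade the family of conditional self-adjoint operators $t_V(e)$ into a single bilinear form, which hinges on genuine separate additivity and homogeneity of $f$ --- a property that for arbitrary effects (as opposed to rank-one projections) has to be extracted carefully from the frame-function identity over two comparable product POVMs together with Eq.~(\ref{eq: no-signalling constraints}). A secondary point to watch is that the $n=1$ input really is the POVM version of Gleason's theorem and not the projection version, so one must verify that marginalising a non-signalling product POVM frame function produces a function additive over \emph{all} POVM decompositions of the identity on each factor, with weight independent of the POVMs chosen on the remaining factors --- again a consequence of no-signalling. Note that, in contrast to the $\mathrm{dim}\geq3$ case, Prop.~\ref{prop: strict inclusion} (hence Keller's tiling conjecture) plays no role in the proof itself; the twisted-product reformulation serves only to turn no-signalling into a constraint set to which POVM-Gleason applies.
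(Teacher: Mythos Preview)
Your proposal is correct and follows essentially the same approach as the paper: replace the base case of the induction in Prop.~\ref{prop: Wallach for twisted product bases} by the POVM-Gleason theorem \cite{Busch2003}, and verify the effect analogue of Lm.~\ref{lm: product bases + ns} so that no-signalling is equivalent to compatibility with twisted product POVMs. If anything, your write-up is more detailed than the paper's own proof sketch, which simply asserts that ``the inductive proof of Prop.~\ref{prop: Wallach for twisted product bases} goes through'' and that ``the same holds'' for Lm.~\ref{lm: product bases + ns}.
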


\begin{proof}
    By \cite{Busch2003} frame functions over $\sigma(\mc{E}(\mc{H}))$ correspond to quantum states for every $\mc{H}_i$ in $\mc{H} = \bigotimes_{i=1}^n \mc{H}_i$ with $\mathrm{dim}(\mc{H}_i) \geq 2$ finite. With this the inductive proof of Prop.~\ref{prop: Wallach for twisted product bases} goes through also for systems of local dimension two.
    The same holds for the correspondence of the no-signalling condition in Eq.~(\ref{eq: no-signalling constraints}) and constraints on $f$ arising from local transformations leaving convex combinations $\sum_i e_i \leq 1$ invariant in Lm.~\ref{lm: product bases + ns}.
\end{proof}

\section{Gleason's theorem in contextual form}\label{sec: Gleason's theorem in contextual form}

In \cite{Doering2005, Doering2012b} it was shown that Gleason's theorem can be reformulated in terms of global sections over the probabilistic presheaf defined as follows.

\begin{definition}\label{def: simple probabilistic presheaf}
    Let $\mc{V}(\mc{H})$ be the partial order of contexts over the Hilbert space $\mc{H}$. The functor $\PPi: \mc{V}(\mc{H})^\mathrm{op} \rightarrow \mathbf{Set}$,
    \begin{align*}
        \PPi(V) &:= \{\mu_V: \mc{P}(V) \rightarrow [0,1] \mid \mu_V(1) = 1,\\
        \forall p,p' \in &\mc{P}(V),pp'=0:\ \mu_V(p+p') = \mu_V(p)+\mu_V(p') \}\; , \\
        \PPi(\tilde{V} \subseteq V)&: \PPi_V \rightarrow \PPi_{\tilde{V}},\ \mu_V \mapsto \mu_{\tilde{V}} := \mu_V|_{\tilde{V}}\; ,
    \end{align*}
    is called the \emph{probabilistic presheaf $\PPi(\mc{V}(\mc{H}))$ over $\mc{H}$}.
\end{definition}

Note that Def.~\ref{def: simple probabilistic presheaf} differs from Def.~\ref{def: dilated probabilistic presheaf} in that we do not require probability distributions in contexts to admit dilations. For single systems, the two definitions are closely related. To see this, note that every density matrix $\rho: \mc{H} \rightarrow \mc{H}$ can be purified, i.e., there exists a Hilbert space $\mc{K} = \mc{H}' \otimes \mc{H}$ and a vector $|\psi_\rho\rangle \in \mc{K}$ such that $\rho =  \mathrm{tr}_{\mc{H}'}(|\psi_\rho\rangle\langle\psi_\rho|)$.
In particular, for pure states we find that $\mc{K} = \mc{H}$, $\rho = |\psi_\rho\rangle\langle\psi_\rho|$, and $\varphi^\gamma: \mc{P}(\mc{H}) \rightarrow \mc{P}(\mc{H})$ is the identity map such that $\gamma(p) = \mathrm{tr}_{\mc{H}}(\rho p) = \mathrm{tr}_{\mc{K}}(|\psi_\rho\rangle\langle\psi_\rho| \phi^\gamma(p)) = \langle\psi_\rho| p |\psi_\rho\rangle$. More generally, for mixed states we may therefore interpret the additional constraint on the existence of dilations (purifications) in contexts as a consistency condition on convex combinations of pure states.

For single systems, the set of global sections of the dilated probabilistic presheaf in Def.~\ref{def: dilated probabilistic presheaf} therefore coincides with the set of global sections of the probabilistic presheaf in Def.~\ref{def: simple probabilistic presheaf}; in both cases we recover the (quantum) state space by \cite{Doering2012b}. On the other hand, Def.~\ref{def: dilated probabilistic presheaf} differs from Def.~\ref{def: simple probabilistic presheaf} over composite systems, where the additional convexity condition (implicit in the dilations in contexts) rules out certain global sections, which correspond to non-positive linear operators between the algebras of the component systems (via the Choi-Jamio\l kowski isomorphism).

\section{Proof of Proposition \ref{prop: Bell-type Gleason for global sections}}\label{sec: Proof of Theorem 2}

Prop.~\ref{prop: Bell-type Gleason for global sections} follows with Prop.~\ref{prop: Wallach for twisted product bases} and the following theorem.

\begin{theorem}\label{thm: frame functions and probabilistic presheaf}
    Global sections of the probabilistic presheaf in Def.~\ref{def: simple probabilistic presheaf} over product contexts $\gamma \in \Gamma(\PPi(\mc{V}(\mc{H}_1) \times \mc{V}(\mc{H}_2)))$ with local dimension $\mathrm{dim}(\mc{H}_i) \geq 3$ finite, bijectively correspond to non-negative frame functions of weight $1$ over twisted product bases $f^\gamma: \sigma(\mc{H}) \rightarrow \mathbb{R}$.
\end{theorem}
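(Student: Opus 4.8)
The plan is to establish a dictionary between the two structures and check that this dictionary is compatible with the respective constraints, then invoke the uniqueness of each structure on its generators. First I would set up the correspondence at the level of objects: given a global section $\gamma = (\mu^\gamma_{(V_1,V_2)})_{(V_1,V_2)}$ of $\PPi(\mc{V}(\mc{H}_1) \times \mc{V}(\mc{H}_2))$, I define $f^\gamma(v_1 \otimes v_2) := \mu^\gamma_{(V_1,V_2)}(p_{v_1} \otimes p_{v_2})$ for any product context $(V_1,V_2)$ containing the rank-one projections $p_{v_1} = |v_1\rangle\langle v_1|$ and $p_{v_2} = |v_2\rangle\langle v_2|$. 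The first thing to verify is that this is well-defined, i.e. independent of the choice of context $(V_1,V_2)$ containing $p_{v_1} \otimes p_{v_2}$: this is precisely the role of the compatibility (marginalisation) condition on global sections, since any two such contexts have a common sub-context (or one can interpolate via a third context containing $p_{v_1} \otimes p_{v_2}$), and $\mu^\gamma$ restricts consistently. Conversely, given a non-negative frame function $f$ of weight $1$ over twisted product bases, I define $\mu^f_{(V_1,V_2)}$ on the atoms $p_{v_1} \otimes p_{v_2}$ of $\mc{P}(V_1 \times V_2)$ by $\mu^f_{(V_1,V_2)}(p_{v_1} \otimes p_{v_2}) := f(v_1 \otimes v_2)$ and extend additively.

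Next I would check that each assignment lands in the right target. For $\gamma \mapsto f^\gamma$: the frame-function condition $\sum_{j,k} f^\gamma(v_{j,1} \otimes v_{k,2}) = 1$ over a \emph{product} basis is immediate from $\mu^\gamma_{(V_1,V_2)}(1) = 1$ applied to the context generated by that product basis; the stronger invariance under twisting operations follows because twisting by a local unitary $U$ acting on a two-dimensional subspace $x_{l_r,r} \otimes (v_{j_i,i} + v_{j'_i,i})$ produces a basis still lying inside a (different) product context reachable from the original by going through the sub-context that coarse-grains the two relevant atoms together — and $\mu^\gamma$ assigns the same total weight to that coarse-grained projection regardless, so the twisted sum equals the untwisted sum. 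Equivalently, this is exactly the content of Lm.~\ref{lm: product bases + ns} read off at the level of the presheaf: compatibility of $\gamma$ across product contexts sharing a sub-context \emph{is} the no-signalling condition Eq.~(\ref{eq: no-signalling constraints}), which Lm.~\ref{lm: product bases + ns} identifies with the twisted-product-basis condition. For $f \mapsto \mu^f$: I must check that $\mu^f_{(V_1,V_2)}$ is a genuine probability distribution on $\mc{P}(V_1 \times V_2)$ — non-negativity is immediate, normalisation is the weight-$1$ condition — and that the collection $(\mu^f_{(V_1,V_2)})$ satisfies the marginalisation (global-section) condition under every inclusion $(\tilde V_1, \tilde V_2) \subseteq (V_1, V_2)$. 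A general such inclusion factors as a composition of elementary coarse-grainings on one factor at a time, and by Eq.~(\ref{eq: product context category}) each elementary step merges two atoms $p_{v_{j_i,i}}, p_{v_{j'_i,i}}$ in factor $i$; compatibility then reduces to the statement that $f$ restricted to the two-dimensional subspace $\mathrm{span}(v_{j_i,i}, v_{j'_i,i})$ (tensored with fixed states on the other factor) behaves like a measure, which is exactly what the twisted-product-basis constraint guarantees (twisting acts transitively on the ONBs of that $2$-subspace, forcing the two-atom sum to be basis-independent).

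Finally I would check that the two maps $\gamma \mapsto f^\gamma$ and $f \mapsto \mu^f$ are mutually inverse. This is essentially formal once well-definedness is in place: applying one and then the other recovers the original data on atoms of product contexts, and since global sections of $\PPi$ over product contexts are determined by their values on such atoms (every projection in $\mc{P}(V_1 \times V_2)$ is a sum of atoms of the form $p_{v_1} \otimes p_{v_2}$, and probability distributions are additive), and frame functions over twisted product bases are by definition functions on $\sigma(\mc{H})$, the round trip is the identity. The main obstacle I anticipate is the well-definedness / compatibility argument — specifically, showing carefully that any inclusion of product contexts decomposes into the elementary two-atom coarse-grainings that the twisting relation $\sim$ controls, and that no "non-product" coarse-graining sneaks in that would require constraints beyond Eq.~(\ref{eq: no-signalling constraints}). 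This is where the precise definition of $\mc{V}(\mc{H}_1) \times \mc{V}(\mc{H}_2)$ via Eq.~(\ref{eq: product context category}) does the essential work: because the order relation is the \emph{product} order, coarse-graining genuinely happens one tensor factor at a time, which is exactly the regime in which twisted product bases and no-signalling coincide by Lm.~\ref{lm: product bases + ns}. Everything else is bookkeeping with Gelfand duality (identifying $\mc{P}(V)$ with the atoms of a finite Boolean algebra) and additivity.
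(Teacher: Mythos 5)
Your proposal is correct and takes essentially the same route as the paper: the same dictionary $f^\gamma(v\otimes w)=\gamma(p_v\otimes q_w)$, with marginalisation through the shared coarse-grained sub-context $\tilde V\subseteq V,\dot V$ identified with the no-signalling condition of Eq.~(\ref{eq: no-signalling constraints}) and hence, via Lm.~\ref{lm: product bases + ns}, with the twisted-product-basis constraints. (Only a phrasing quibble: an elementarily twisted basis need not itself lie inside a product context --- only the two new atoms do --- but the argument you actually run, through the shared sub-context and Lm.~\ref{lm: product bases + ns}, is the correct one and is what the paper uses.)
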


\begin{proof}
	Every frame function over twisted product bases defines a global section on product contexts by $\gamma_f(p_{v_1} \otimes p_{v_2}) = f(v_1 \otimes v_2)$, $p_{v_i} = |v_i\rangle\langle v_i|$ for all $v_i \in \sigma(\mc{H}_i)$. Marginalisation over product contexts corresponds to no-signalling and thus follows from the constraints on $f$ over twisted product bases by Lm.~\ref{lm: product bases + ns}.
	
	Conversely, every global section $\gamma \in \Gamma(\PPi(\mc{V}(\mc{H}_1) \times \mc{V}(\mc{H}_2)))$ over product contexts $V \in \mc{V}(\mc{H}_1) \times \mc{V}(\mc{H}_2)$ defines a map $f^\gamma: \sigma(\mc{H}) \rightarrow \mathbb{R}^+$, $f^\gamma(v \otimes w) := \gamma(p_v \otimes q_w)$ via the link between product projections $p_v \otimes q_w \in \mc{P}(\mc{H})$ and basis elements $v \otimes w \in \sigma(\mc{H})$ given by $p_v = |v\rangle\langle v|$, $q_w = |w\rangle\langle w|$.
	Moreover, this map satisfies the constraints encoded in twisted product bases, which for global sections arise from marginalisation between product contexts of the form (and by symmetry for $i=1 \leftrightarrow i=2$):
	
	\begin{figure}[!htb]
    \centering
    \vspace{-0.3cm}
    \begin{minipage}{.15\textwidth}
        \centering
        \begin{align*}
	        V &:= V_1 \times \{p_{1,2},p_{2,2},(p_{1,2}+p_{2,2})^\perp\} \\
	        \dot{V} &:= V_1 \times \{q_{1,2},q_{2,2},(p_{1,2}+p_{2,2})^\perp\} \\
	        \tilde{V} &:= V_1 \times \{(p_{1,2}+p_{2,2}),(p_{1,2}+p_{2,2})^\perp\}
	\end{align*}
    \end{minipage}%
    \begin{minipage}{0.15\textwidth}
        \centering
        \vspace{0.3cm}
        \begin{tikzpicture}[node distance=0.6cm, every    node/.style={scale=0.9}]
            \node(V)                    {$V$};
            \node(0)    [right of=V]  {};
            \node(W)    [right of=0]  {$\dot{V}$};
            \node(X)    [below=0.65cm of 0]  {$\tilde{V}$};

            \draw [right hook-latex](X) -- (V);
            \draw [left hook-latex](X) -- (W);
        \end{tikzpicture}
    \end{minipage}
    \end{figure}
    
	Here, we define contexts via their projections $p_{j_i,i} := |v_{j_i,i}\rangle \langle v_{j_i,i}|$, $q_{k_i,i} := |w_{k_i,i}\rangle \langle w_{k_i,i}|$ corresponding to product bases $(v_{j_i,i})_{j_i=1}^{d_i},(w_{k_i,i})_{k_i=1}^{d_i} \in \mathrm{ONB}(\mc{H}_i)$ such that $V_1 = \{p_{1,1},\cdots,p_{d_1,1}\}$ and $p_{1,2} + p_{2,2} = q_{1,2} + q_{2,2}$. Again, this is analogous to the proof of Lm.~\ref{lm: product bases + ns}.
\end{proof}

Thm.~\ref{thm: non-signalling frame functions correspond to self-adjoint operators PVM} for non-signalling frame functions therefore directly translates to Thm.~\ref{thm: frame functions and probabilistic presheaf} for global sections of the probabilistic presheaf. Clearly, this implies Prop.~\ref{prop: Bell-type Gleason for global sections}. However, note that it is no longer the case that every non-negative frame function over twisted product bases of weight $1$ also corresponds to a global section of the Bell presheaf. The reason is that the constraints on probability distributions between product contexts are more restrictive in the case of the dilated probabilistic presheaf than those in the probabilistic presheaf, which in turn allows us the prove the stronger result in Lm.~\ref{lm: global sections to Jordan homos}.

\section{Proof of Lemma \ref{lm: global sections to Jordan homos}}\label{sec: Proof of Lm: global sections to Jordan *-homos}

Let $\mc{V}_{1\& 2} := \mc{V}(\mc{H}_1) \times \mc{V}(\mc{H}_2)$ and $\gamma \in \Gamma(\PPi(\mc{V}_{1\& 2}))$.\\

\textbf{Step 1 - Gleason's theorem.}
Fix a context $V_1 \in \mc{V}(\mc{H}_1)$ and consider the corresponding partial order of contexts inherited from $\mc{V}_{1\&2}$ by restriction,
\begin{equation*}
    \mc{V}_{1\&2}(V_1) := \{V_1 \times V_2 \mid V_2 \in \mc{V}(\mc{H}_2)\}\; .
\end{equation*}
In every context $V = V_1 \times V_2 \in \mc{V}_{1\&2}$, the probability distribution $\mu^\gamma_V \in \PPi(\mc{V}_{1\&2})_V$ corresponding to the global section $\gamma$ takes the form: $\forall p \in \mc{P}(V_1), q \in \mc{P}(V_2)$,
\begin{equation}\label{eq: Gleason 1}
    \mu^\gamma_V(p,q) = \mu_{V_1}^\gamma(p) \mu_{V_2}^{\gamma}(q \mid p) = \mu_{V_1}^\gamma(p) \gamma_2^p(q)\; .
\end{equation}
Here, $(\mu_{V_2}^\gamma (\underline{\ \ } \mid p))_{V_2 \in \mc{V}(\mc{H}_2)} =: \gamma_2^p \in \Gamma(\PPi(\mc{V}_{1\&2}(V_1)))$ is a global section of the probabilistic presheaf $\PPi(\mc{V}_{1\&2}(V_1))$, which also depends on $p \in \mc{P}(V_1)$. Since $\PPi(\mc{V}_{1\&2}(V_1)) \cong \PPi(\mc{V}(\mc{H}_2))$, by Gleason's theorem there is a unique density matrix $\rho_2^p$ such that $\gamma_2^p = \mathrm{tr}(\rho_2^p \cdot \underline{\ \ }) \in \mc{S}(\mc{L}(\mc{H}_2))$.
Moreover, since $V_1 \in \mc{V}(\mc{H}_1)$ is arbitrary, Eq.~(\ref{eq: Gleason 1}) holds for all $p \in \mc{P}(\mc{H}_1)$. Let $p = p_1 + p_2$ with $p_1,p_2 \in \mc{P}(\mc{H}_1)$ orthogonal, i.e., $p_1p_2 = 0$. As $\gamma$ is additive we have,
\begin{equation*}
    \mu_{V_1}^\gamma(p) \gamma_2^p = \mu_{V_1}^\gamma(p_1) \gamma_2^{p_1} + \mu_{V_1}^\gamma(p_2) \gamma_2^{p_2}\; .
\end{equation*}
It follows that the map $\varrho^\gamma(p) := \mu_{V_1}^\gamma(p) \rho_2^p$ also satisfies,
\begin{equation*}\label{eq: finite additive operator measure}
    \varrho^\gamma(p) = \varrho^\gamma(p_1) + \varrho^\gamma(p_2)
\end{equation*}
for all $p = p_1 + p_2$ with $p_1,p_2 \in \mc{P}(\mc{H}_1)$, $p_1p_2 = 0$. Since $\mu^\gamma_{V_1}(p) \geq 0$ for all $p \in \mc{P}(\mathbb{C}^{d_1})$, $\varrho^\gamma: \mc{P}(\mc{H}_1) \rightarrow \mc{L}(\mc{H}_2)_+$ defines an additive map into the positive matrices on $\mc{H}_2$.

\vspace{0.5cm}

\textbf{Step 2 - Naimark's theorem.}
Note that $\varrho^\gamma(p)$ corresponds to a density matrix (up to the factor $\mu^\gamma_{V_1}(p)$). We can thus find a Hilbert space $\mc{K} = \mc{H}' \otimes \mc{H}_2$ and a purification $|\psi\rangle^{\gamma,p} \in \mc{K}$ such that
$\varrho^\gamma(p) = \mathrm{tr}_{\mc{H}'}(|\psi\rangle^{\gamma,p}\langle\psi|^{\gamma,p})$.
This is a special case of Naimark's theorem \cite{Naimark1943}. In fact, by Gelfand duality every context $V_1 \in \mc{V}(\mc{H}_1)$ defines a compact Hausdorff space whose $\sigma$-algebra of open (and closed) sets corresponds to the projection lattice $\mc{P}(V_1)$. $\varrho^\gamma|_{V_1}$ thus becomes a positive operator-valued measure, for which
Naimark's theorem guarantees the existence of a Hilbert space $\mc{K}$, a linear map $v: \mc{H}_2 \rightarrow \mc{K}$, and an embedding (a spectral measure) $\varphi_{V_1}: \mc{P}(V_1) \hookrightarrow \mc{P}(\mc{K})$ such that $\varrho^\gamma|_{V_1} = v^* \varphi^\gamma_{V_1} v$. Now, note that by the definition of the dilated probabilistic presheaf (cf. Def.~\ref{def: dilated probabilistic presheaf}), we obtain such embeddings \emph{consistently across contexts} $V_1 \in \mc{V}(\mc{H}_1)$, i.e., the map $\varrho^\gamma_{V_1}
= v^* \varphi^\gamma_{V_1} v$ defines a collection $\varphi^\gamma = (\varphi^\gamma_{V_1})_{V_1 \in \mc{V}(\mc{H}_1)}$ with $\varphi^\gamma_{\tilde{V_1}} = \varphi^\gamma_{V_1}|_{\tilde{V_1}}$ for all $\tilde{V_1} \subset V_1 \in \mc{V}(\mc{H}_1)$. In particular, we can choose $\mc{K}$ and $v: \mc{H}_2 \rightarrow \mc{K}$ independently of contexts.

Hence, $\varphi^\gamma: \mc{P}(\mc{H}_1) \rightarrow \mc{P}(\mc{K})$ is a kind of \emph{globally} defined purification of $\varrho^\gamma$. More precisely, $\varphi^\gamma$ is an orthomorphism, i.e., for all $p,p' \in \mc{P}(\mc{H}_1)$, $pp'=0$ we have $\varphi^\gamma(0) = 0$, $\varphi^\gamma(1-p) = 1- \varphi^\gamma(p)$, and $\varphi^\gamma(p + p') = \varphi^\gamma(p) + \varphi^\gamma(p')$.
This follows immediately since $\varphi^\gamma = (\varphi^\gamma_{V_1})_{V_1 \in \mc{V}(\mc{H}_1)}$ is a family of embeddings (spectral measures). In particular, there is an embedding $\varphi^\gamma_{V_1}$ with $p,p' \in \mc{P}(V_1)$.

\vspace{0.5cm}

\textbf{Step 3 - Dye's theorem (Gleason's theorem II).}
By a variant of Dye's theorem in \cite{BunceWright1993}, $\varphi^\gamma$ further lifts to a Jordan homomorphism $\Phi^\gamma: \mc{J}(\mc{H}_1) \rightarrow \mc{J}(\mc{K})$. (Recall that the linear map $\Phi^\gamma$ is a Jordan homomorphism if it preserves the anticommutator, $\{a,b\} := ab + ba$ for all $a,b \in \mc{J}(\mc{H}_1)$.)

There are two steps to this theorem. First, we obtain a quasilinear map by extending $\varphi^\gamma$ linearly in contexts. By the generalisation of Gleason's theorem in \cite{BunceWright1992}, this map lifts to a linear map $\Phi^\gamma: \mc{J}(\mc{H}_1) \rightarrow \mc{J}(\mc{K})$. In particular, $\varrho^\gamma$ uniquely extends to the linear map $\phi^\gamma = v^* \Phi^\gamma v$. Second, since $\varphi^\gamma$ is an orthomorphism, $\Phi^\gamma$ preserves squares. Namely, for every $a \in \mc{L}_\mathrm{sa}(\mc{H}_1)$ with spectral decomposition $a = \sum_{i=1}^n a_ip_i$ one computes,
\begin{equation*}
    \Phi^\gamma(a^2) = \Phi^\gamma(\sum_{i=1}^n a_i^2p_i) = \sum_{i=1}^n a_i^2 \Phi^\gamma(p_i) = \Phi^\gamma(a)^2\; .
\end{equation*}
Moreover, with $\{a,b\} = ab + ba = \frac{1}{2}[(a+b)^2 - a^2 - b^2]$,
\begin{align*}
    \Phi^\gamma(\{a,b\}) &= \Phi^\gamma(\frac{1}{2}[(a+b)^2 - a^2 - b^2]) \\
    &= \frac{1}{2}[(\Phi^\gamma(a)+\Phi^\gamma(b))^2 - \Phi^\gamma(a)^2 - \Phi^\gamma(b)^2] \\
    &= \{\Phi^\gamma(a),\Phi^\gamma(b)\}\; .
\end{align*}
Hence, $\Phi^\gamma$ is a Jordan homomorphism. Finally, we extend $\Phi^\gamma$ to a Jordan $*$-homomorphism on the complexified algebras $\mc{J}(\mc{H}_1) = \mc{J}_\mathrm{sa}(\mc{H}_1) + i\mc{J}_\mathrm{sa}(\mc{H}_1)$, $\mc{J}(\mc{K}) = \mc{J}_\mathrm{sa}(\mc{K}) + i\mc{J}_\mathrm{sa}(\mc{K})$ by setting $\Phi^\gamma(a + ib) := \Phi^\gamma(a) + i\Phi^\gamma(b)$ for all $a,b \in \mc{L}_\mathrm{sa}(\mc{H}_1)$. Then, $\Phi^\gamma((a + ib)^*) = \Phi^\gamma(a+ib)^*$. In summary, every global section of the Bell presheaf $\gamma \in \Gamma(\PPi(\mc{V}_{1\& 2}))$ uniquely lifts to a positive linear map $\phi^\gamma: \mc{L}(\mc{H}_1) \rightarrow \mc{L}(\mc{H}_2)$, with $\phi^\gamma = v^* \Phi^\gamma v$ for some linear map $v: \mc{H}_2 \rightarrow \mc{K}$ and $\Phi^\gamma: \mc{J}(\mc{H}_1) \rightarrow \mc{J}(\mc{K})$ a Jordan $*$-homomorphism. This completes the proof.

\section{Proof of Theorem \ref{thm: Bell-type Gleason}}\label{sec: Proof of Theorem 3}

\begin{proof}
    Every state $\sigma \in \mc{S}(\mc{L}(\mc{H})) = \mc{S}(\mc{L}(\mc{H}_1) \otimes \mc{L}(\mc{H}_2))$ defines a completely positive map $\phi^\sigma: \mc{L}(\mc{H}_1) \rightarrow \mc{L}(\mc{H}_2)$ via the Choi-Jamio\l kowski isomorphism. It thus follows that $\sigma$ defines a unique orientation-preserving global section of the composite presheaf
    $\gamma^\sigma(p,q) = \mu^\gamma_V(p,q) := \sigma(p \otimes q)$
    (by restriction to contexts $V \in \mc{V}(\mc{H}_1) \times \mc{V}(\mc{H}_2)$).
    
    For the converse direction, let $\gamma \in \Gamma(\PPi(\widetilde{\mc{V}(\mc{H}_1)} \times \widetilde{\mc{V}(\mc{H}_2)}))$. By Lm.~\ref{lm: global sections to Jordan homos}, $\phi^\gamma = v^* \Phi^\gamma v$ for $\Phi^\gamma$ a Jordan $*$-homomorphism, and since $\gamma$ is also orientation-preserving, 
    $\Phi^\gamma$ further lifts to a $*$-homomorphism. By Stinespring's theorem $\phi^\gamma$ is thus completely positive and by Choi's theorem $\rho_{\phi^\gamma}$ is a density matrix.
    Hence, $\sigma^\gamma := \mathrm{tr}(\rho_{\phi^\gamma} \cdot \underline{\ \ }) \in \mc{S}(\mc{L}(\mc{H})) = \mc{S}(\mc{L}(\mc{H}_1) \otimes \mc{L}(\mc{H}_2))$ defines a unique quantum state.
\end{proof}

\end{document}